\documentclass[journal]{IEEEtran}

\hyphenation{Power System Engineering}

\usepackage{amsmath, amssymb, bbm}
\usepackage{graphicx, pifont} 
\let\oldding\ding
\renewcommand{\ding}[2][1]{\scalebox{#1}{\oldding{#2}}}

\usepackage{caption}
\usepackage{subfigure}

\newtheorem{theorem}{Theorem}
\newtheorem{proof}{Proof}

\usepackage{pgfplots,pgfplotstable,booktabs}

\pgfplotsset{compat=1.5}


\usepackage{etoolbox}

\apptocmd{\sloppy}{\hbadness 10000\relax}{}{}

\makeatletter
\newcommand*{\rom}[1]{\expandafter\@slowromancap\romannumeral #1@}
\makeatother

\newcommand*\circled[1]{\tikz[baseline=(char.base)]{
            \node[shape=circle,draw,inner sep=0.2pt] (char) {#1};}}

\usepackage{tikz}

\usepackage{cite}

\begin{document}

\title{Robust Stability Assessment in the Presence of Load Dynamics Uncertainty}

\author{Hung~D.~Nguyen,~\IEEEmembership{Student Member,~IEEE,} and~Konstantin~Turitsyn,~\IEEEmembership{Member,~IEEE}
\thanks{Hung D. Nguyen and Konstantin Turitsyn are with the Department of Mechanical Engineering, Massachusetts Institute of Technology, Cambridge, MA, 02139 USA e-mail: hunghtd@mit.edu and turitsyn@mit.edu}}

\markboth{IEEE Transactions on Power Systems 2015, in press}%
 {Shell \MakeLowercase{\textit{et al.}}: Robust Stability Assessment in the Presence of Load Dynamics Uncertainty}%

\maketitle

\begin{abstract}

Dynamic response of loads has a significant effect on system stability and directly determines the stability margin of the operating point. Inherent uncertainty and natural variability of load models make the stability assessment especially difficult and may compromise the security of the system. We propose a novel mathematical ``robust stability'' criterion for the assessment of small-signal stability of operating points. Whenever the criterion is satisfied for a given operating point, it provides mathematical guarantees that the operating point will be stable with respect to small disturbances for any dynamic response of the loads. The criterion can be naturally used for identification of operating regions secure from the occurrence of Hopf bifurcation. Several possible applications of the criterion are discussed, most importantly the concept of Robust Stability Assessment (RSA) that could be integrated in dynamic security assessment packages and used in contingency screening and other planning and operational studies.
\end{abstract}

\begin{IEEEkeywords}
Bifurcation, dynamics, modeling, power system stability, power system simulation, robustness, uncertainty.
\end{IEEEkeywords}

\section{Introduction}

Loss of stability of power systems usually results in some of the most dramatic scenarios of power system failure and has played an important role in most of the recent blackout. The dynamic of response of loads affects the voltage and to lesser extend angular stability in most important way \cite{kundur2004definition}. The loads affect the overall system behavior and may lead to loss of stability because of insufficient damping \cite{Hiskens95damping}. Typically the loss of stability of the system occurs via Hopf bifurcation \cite{dobson1989towards, canizares1994transcritical, Chow2006applied}, when some part of the upper branch of the nose curve becomes unstable. The load response was shown to play a major role in this scenario for example in \cite{Overbye1994, PaiDynamics, Venkatasubramanian1992, dobson1993new}. Hereafter, whenever we mention stability, we mean small-disturbance stability that associates with a particular operating point.

Loads, by definition, represent an aggregate of hundreds or thousands of individual devices such as motors, lighting, and electrical appliances \cite{standardloadmodel}. Load modeling has been a subject of intensive research for several decades \cite{Concordia,Karrison94,Meyer1982,Mansour94,Loadmodel}; however, it is still a rather open subject. Even though some certain types of loads such as aluminum or steel plant, and pumped hydroelectric storage are considered as well-identified ones \cite{ilic2000dynamics}; due to its natural complexity and uncertainty, load dynamics, in general, may be never known completely in operational planning, operation, and control \cite{Hiskens2000bounding, Hiskens2006sens}. The lack of knowledge about the dynamic characteristic of each individual component due to poor measurements, modeling, and exchange information, as well as the uncertainties in components/customers behaviors via switching events contribute to load uncertainties. Hence, loads are the main source of uncertainty \cite{Hiskens2006sens} that undermines the accuracy of the power dynamic models used by system operators all over the world.

Incorporation of the uncertainty into existing models is essential for improving the system security usually defined as the ability of the system to withstand credible disturbances/contingencies while maintaining power delivery services continuity \cite{KundurDSA, Oren2007management}. The future power systems will likely be exposed to higher levels of overall stress and complexity due to penetration of renewable generators, and more intelligent loads, deregulation of the system, and introduction of short-time scale power markets. Secure operation of these systems will necessarily require the operator to track the voltage stability boundary with new generation of security assessment tools providing comprehensive, fast and accurate assessment \cite{EPRIVSTAB1993}. This work addressed the need in ``robust" security assessment tools that can provide security guarantees even in the presence of modeling uncertainty. 

In \cite{dobson2003distance,dobson1991direct,dobson1992iterative}, several techniques were developed that rely on transversality conditions for quantifying the distance to various types of bifurcation including Hopf bifurcation in multidimensional parameter space. These techniques ensure robust stability of the equilibrium associated with nominal parameter $\Lambda_0$. Although they could be naturally extended to a uncertainty in small subspace of parameters, there extension to situations when the space of uncertain parameters has high dimension. In this paper, we provide robust stability certificate in multidimensional space of certain system parameters. Unlike the works mentioned above we do without tracking the most dangerous direction, rather we indicate whether such directions exist or not. Hence, we do not attempt to find the unstable points associated with some certain critical parameters.

The existence of robust stability certificate and whole region of operating points that are certified to be robust stable provides new practical alternatives for dealing with load dynamics uncertainty. It has been noted in \cite{dobson2011irrelevance} that traditional ``voltage collapse'' instability is not affected by the load dynamics as it corresponds to saddle-node bifurcation, where the equilibrium point disappears altogether. At the same time for the more common Hopf bifurcation it was argued in \cite{hiskens2006significance} that sensitivity analysis of the system trajectories may provide enough information to assess the risks associated with common disturbances. Moreover, whenever the system operates in the robust stability regime, the stability can be certified even without knowing the dynamic characteristics of the load altogether. The stability of the system can be certified simply by analyzing the static characteristics of the loads in combination with well-understood dynamic models of generators. In this sense, we argue that accurate modeling of the loads is essential only when the system operates in the intermediate regimes of the nose curves or the PV curves, between the robust stability region and the saddle-node bifurcation on the nose tip. 

The structure of the paper and the main contributions are summarized below. After introducing our modeling assumptions in \ref{sec:LoadModeling} we derive the novel robust stability criterion in section \ref{sec:robuststability}. Then, we propose a practical algorithm RSA for robust stability certification. In section \ref{sec:simulation} we perform various simulations with several test cases from $2$-bus system to WSCC $3$-machine, $9$-bus system and the IEEE $39$-bus New England system to illustrate the concept of robust stability and RSA. The dynamic simulations are implemented in SystemModeler $4.0$ and the computations are performed in Mathematica $10$ and with the help of CVX program, a package for convex programming. Then in section \ref{sec:applications} we discuss the proposed applications of the algorithm, and possible extensions to other kinds of uncertainty. Finally, the non-certified robust stability region is discussed in section \ref{sec:nonrobust}.

\section{Voltage stability and load dynamics} \label{sec:loadmodel}

\subsection{Voltage stability}
While the power system operates in stressed heavily loaded regime it may be prone to subject to voltage stability problems. The secure operating region is confined by voltage stability boundary. As a common practice, static voltage stability criteria is widely used by system operators \cite{xie2007novel,LeXieQSVS}. Moreover, it has been argued that static analysis is preferred over dynamic approach \cite{morison1993voltage}. At the same time it has been reported in many works that Hopf bifurcation may destabilize the system before it reaches the static stability limits \cite{canizares1994transcritical}.

\begin{figure}[ht]
    \centering
    \includegraphics[width=5.5cm]{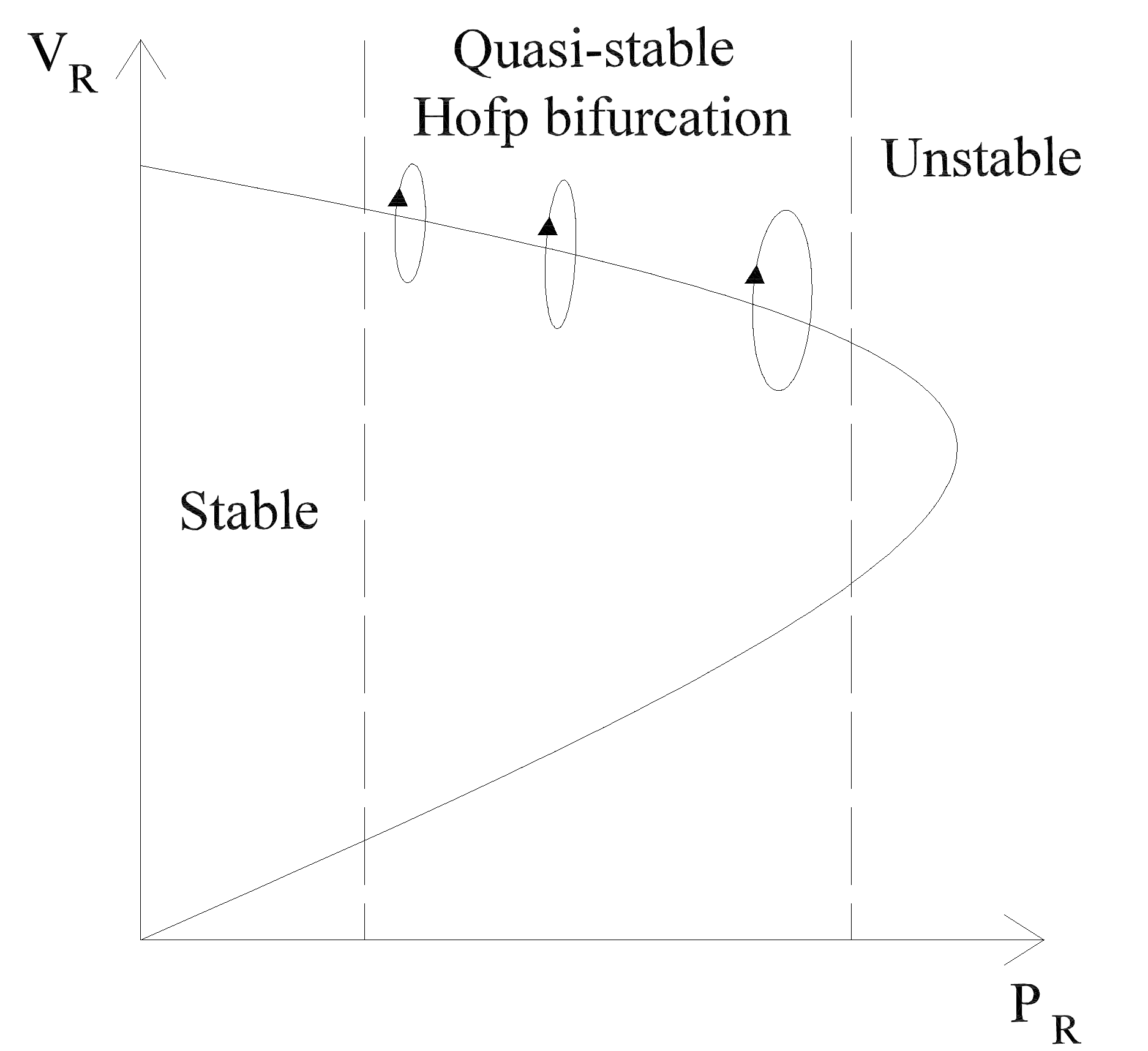}
    \caption{Qualitative visualization of Hopf bifurcation \cite{knightemergency}}
    \label{fig:qualHopf}
\end{figure}

Under some particular conditions, Hopf bifurcation may not occur \cite{Chang1988stability} but typically, Hopf bifurcation determines the stability margins of most common systems \cite{canizares1994transcritical} when the system exhibits Hopf bifurcation before it reaches the saddle-node bifurcation point or the tip of the nose curve. This situation can happen in the quasi-stable Hopf bifurcation region shown in Figure \ref{fig:qualHopf}. The term quasi-stability used in power engineering is related to the oscillatory behavior of the system that is observed after the occurrence Hopf Bifurcation \cite{knightemergency}. Detecting the loadability limits associated with the bifurcation is a much more complicated problem in comparison to the static stability analysis associated with the saddle-node bifurcation \cite{AlvaradoHopfissue, DobsonSenHopf, Ajjarapu2005fastHopf}. Some realistic examples of finding Hopf bifurcation point can be found in \cite{ilic2000dynamics} and related works. In this context, the key contribution of this work is an alternative approach based on robust stability. Whenever the robust criterion criterion is satisfied, the system is mathematically guaranteed that Hopf bifurcation cannot occur.
\subsection{Dynamic load modeling} \label{sec:LoadModeling}

The stability of any operating point and the position of the Hopf bifurcation on the nose curve depends on the dynamical behavior of loads on individual buses. Traditional models of load dynamics are based on combination of differential and algebraic equations for the load state. In steady state the loads can be characterized by their static characteristics $P^s(V,\omega)$ and $Q^s(V,\omega)$ which describe the dependence of the active and reactive power consumption levels $P,Q$ on the load bus voltage level $V$ and system frequency $\omega$. The dynamic state of the loads is typically characterized by single state variable $x$ that represents the internal state of the system, for example the average slip of the induction motors. Whenever the composition of the loads on a single bus is highly heterogeneous, it may be represented by a parallel interconnection of several components characterized by different models. At any moment of time the load consuming active power $P$ and reactive power $Q$ can be characterized by the effective conductance $g = P/V^2$ and susceptance $b = Q/V^2$. The first order dynamic model for the conductance representing the dynamics of the internal state of the load can be than written in a general form as:
\begin{align}\label{eq:abstract}
& \dot g = F(g, V, \omega)
\end{align}
The right hand side of this equation is not arbitrary and should have the equilibrium point corresponding to the steady state characteristic of the load. Hence whenever the active power consumption is equal to steady rate, so $P=g V^2 =  P^s(V,\omega)$, the right hand side of \eqref{eq:abstract} should vanish, so $F(P^s(V,\omega)/V^2,V,\omega) = 0$. Any function $F$ that satisfies this relation can be rewritten as $F = \tau^{-1}(P^s(V,\omega) - g V^2)$. In this form, the factor $\tau$ generally depends on voltage and frequency and can be interpreted as instant relaxation rate of the load. Whenever the load is stable when connected to an infinite slack bus, the factor $\tau$ can be trivially shown to be positive, so $\tau > 0$. The same mathematical form and analysis also apply to the load susceptance.

This discussion allows us to conclude that for the purposes of small-signal stability studies the first order models of the loads can be represented as 
\begin{align}\label{eq:geqsimple}
\tau_{gk}  \dot{g_k} = - (g_k V_k^2 - P^s_k), \\\label{eq:beqsimple}
\tau_{bk}  \dot{b_k} = - (b_k V_k^2 - Q^s_k).
\end{align}
Here the index $k$ runs over all load buses in the system, the factors $\tau_{gk}$, $\tau_{bk}$ represent the uncertainties in the dynamic models, that can be also interpreted as relaxation time. The factors $P^s_k$ and $Q^s_k$ represent the voltage dependent static characteristic of the loads. 

This type of load model is also introduced in \cite{taylor1994stability, Cutsem, Overbye1994}, typically for thermostatic loads. However as we have argued in \cite{nguyen2014voltage} this model can naturally be used to represent the standard models for thermostatically controlled loads, induction motors, power electronic converters, aggregate effects of otherwise unmodelled distribution Load Tap Changer (LTC) transformers etc. The static loads can be also naturally modeled within this framework by taking the limit $\tau_{gk} \to 0$. Obviously, the range of time constants is wide, ranging is from cycles to minutes and can introduce a lot of uncertainty in the modeling process.

We finish this section by comparing the model to the two other classical load models.  Equations \eqref{eq:geqsimple} are just another form of the traditional dynamic load models introduced originally in \cite{Karrison94, Hill93}:
\begin{equation} \label{eq:generalform}
\centering
\dot{P_d}+f(P_d,V)=g(P_d,V)\,\dot{V}
\end{equation}
Here $P_d$ is the instantaneous power, that is denoted by $p_k = g_k V_k^2$ in our notations and $V$ is the bus voltage magnitude, referred to as $V_k$ in equations \eqref{eq:geqsimple}. The more specific form of these equations, known as exponential recovery model was introduced in \cite{Karrison94, Hill93}:
\begin{equation} \label{eq:exploadHill}
\centering
T_p\,\dot{P_d}+P_d=P_s(V)+k_p(V)\,\dot{V}
\end{equation}
We can recover the model (\ref{eq:generalform}) from equation (\ref{eq:geqsimple}) by taking the derivative of $g_k |V_k|^2$. This results in the following expression:
\begin{equation}  \label{peq}
 \dot{p_k} + \frac{p_k - P^s_k(V_k)}{\tau_{g\,k}}V_k^2 = 2 \frac{p_k}{V_k}\frac{d }{d t}V_k
\end{equation}

Another equivalent model was introduced in \cite{Mansour94} and \cite{Lesieutre1995}:
\begin{align}
T_p\frac{dx}{dt}=P_s(V)-P;\, P=x\,P_t(V)\\
T_q\frac{dy}{dt}=Q_s(V)-Q;\, Q=y\,Q_t(V)
\end{align}
where $x$ is the state; subscript $s$ and $t$ indicate steady state and transient values, respectively; $P_t(V)=V^\alpha$, $P_s(V)=P_0\,V^a$; $Q_t(V)=V^\beta$, $Q_s(V)=Q_0\,V^b$. This model is equivalent to \eqref{eq:geqsimple}, \eqref{eq:beqsimple} with $x=g_k$ and $y=b_k$ when $\alpha=\beta=2$.

The proposed load model can naturally represent the most common types of loads, such as induction motors, thermostatically controlled loads. Hence, we believe that the form of the load model is rather general and can be used in a variety of practically relevant problems.

For example, below we show, how the induction motor model can be embedded in our generic modeling framework. 

\begin{figure}[ht]
    \centering
    \includegraphics[width=1 \columnwidth]{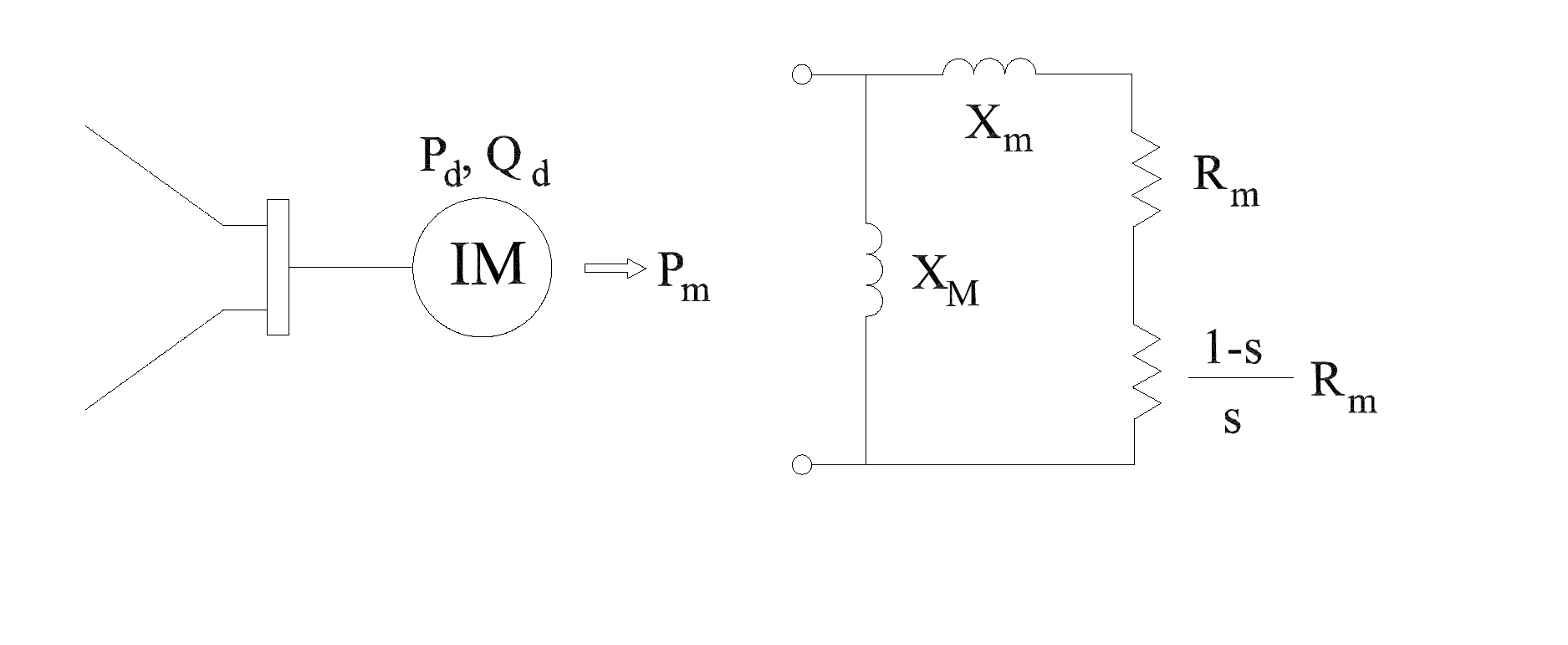}
	\caption{Induction motor load model \cite{Hill93}}
    \label{fig:motor}
\end{figure}
The induction motor depicted in Figure \ref{fig:motor} can be described as \cite{Hill93}:

\begin{equation} \label{eq:motorslipdynamics}
\centering
\dot{s}=\frac{1}{I \omega_0^2}(\frac{P_m}{1-s}-P_d)
\end{equation}
where $s$ is the motor slip, $\omega_0$ is the base frequency, $I$ is the rotor moment of inertia, $P_m$ is the mechanical power, and $P_d$ is the electric power given by
\begin{equation} \label{eq:motoractivepower}
\centering
P_d=\frac{V^2\,R_m\,s}{R_m^2+X_m^2}=V^2\,h(s)
\end{equation}
Since $P_d=h(s)\,V^2$, from (\ref{eq:motoractivepower}), we can represent the motor as the dynamic inductance with 
\begin{equation} \label{eq:motorys}
g = h(s) 
\end{equation}

In normal operating regime, this relation can be also reversed so that $s=h^{-1}(g)$.

Differentiation of the two sides of (\ref{eq:motorys}) with respect to time yields the following expression:
\begin{equation}\label{eq:motordotg}
\centering
\dot{g}=\alpha\,\frac{dh}{ds}(\frac{P_m}{1-s}-g\,V^2)
\end{equation}
where $\alpha=\frac{1}{Iw_0^2}$.
As long as $s$ can be expressed in terms of $g$ we reproduce the general form (\ref{eq:abstract}). Similar approach can be applied to most of the other types of loads, like thermostatically controlled loads, static loads behind Under-Load Tap Changers (ULTCs), and certainly the static loads which are described in more detail in Appendix \ref{app:LM}.

From (\ref{eq:motorys}) and (\ref{eq:motordotg}), the induction motor load can be modeled in the form of (\ref{eq:abstract}). More importantly, the proposed dynamic load model not only is convenient for static analysis even in non-conventional power flow regime \cite{nguyen2014voltage} but also satisfies all fundamental requirements for load models in voltage stability studies which are mentioned in \cite{Morisonload2003}.

\section{Stability theory} \label{sec:robuststability}
In this section we address the question of the small-signal stability of an operating point by first reviewing the classical stability criteria applied to the problem of voltage stability of modern power system models in subsection \ref{sec:linear} and then introduce the central result of the paper: robust stability criterion in \ref{sec:robust}.

\subsection{Linear stability} \label{sec:linear}
Most common models of power system dynamics describe the evolution of the power system in terms of a system of nonlinear differential algebraic equations of the form
\begin{align} \label{eq:DAEs}
\dot{x}&=F(x,y)\\
0 &= G (x,y)
\end{align}
where $x \in \mathbb{R}^n$, $y\in \mathbb{R}^m$ are vectors representing the state variables, algebraic variables. The state variables can be naturally decomposed in generator $x_\mathcal{G} \in \mathbb{R}^{n_\mathcal{G}}$ and load states $x_\mathcal{L}\in \mathbb{R}^{n_\mathcal{L}}$. Here $n_\mathcal{L}$ and $n_\mathcal{G}$ are the total number of states associated with loads and generators, respectively. Moreover, we assume that the subset of algebraic variables $y$ represents the bus voltages, including the voltages on load buses.

Under the assumptions above it is possible to represent (\ref{eq:DAEs}) in terms of $x_\mathcal{G}$ and $x_\mathcal{L}$ as:
\begin{align} \label{eq:geneq}
\dot{x_\mathcal{G}}&=F^\mathcal{G}(x_\mathcal{G},y)\\ \label{eq:loadeq}
\dot{x_\mathcal{L}}&=\mathcal{T}^{-1}F^\mathcal{L}(x_\mathcal{L},y)\\
0 &= G (x_\mathcal{G},x_\mathcal{L},y)
\end{align}
where $\mathcal{T}$ is a diagonal matrix with the size of $n_\mathcal{L}\times n_\mathcal{L}$ whose diagonal entries are the time constants of corresponding loads as introduced in \eqref{eq:geqsimple}; $F^\mathcal{G}$ and $F^\mathcal{L}$ are the functions associate with the sets of generators and the loads, respectively. Note, that in this representation the functions $F^\mathcal{G},F^\mathcal{L}$ and $G$ can be assumed to be known and all the uncertainty is aggregated in the matrix $\mathcal{T}$. This assumption is reasonable in the situations when the network characteristics are known, generator models are verified and static load characteristics are understood better than their dynamic response which is the case in practical situations. Note, also, that in the equations \eqref{eq:geneq} and \eqref{eq:loadeq} there is no direct coupling between the dynamics of generators and loads, as the individual load components interact only indirectly via algebraic bus voltage variables.

Small signal stability can be characterized by considering the linearized version of the equations for the deviations of state and algebraic variables from their equilibrium values. 
\begin{equation} \label{eq:linearDAE}
\centering
\begin{bmatrix}
\dot{\delta{x_\mathcal{G}}}\\\dot{\delta{x_\mathcal{L}}}\\0
\end{bmatrix} = \begin{bmatrix}
F^\mathcal{G}_{x_\mathcal{G}} & F^\mathcal{G}_{x_\mathcal{L}} & F^\mathcal{G}_{y}\\
\mathcal{T}^{-1}F^\mathcal{L}_{x_\mathcal{G}} & \mathcal{T}^{-1}F^\mathcal{L}_{x_\mathcal{L}} & \mathcal{T}^{-1}F^\mathcal{L}_{y}\\
G_{x_\mathcal{G}} & G_{x_\mathcal{L}} & G_{y}\\
\end{bmatrix} \begin{bmatrix}
\delta{x_\mathcal{G}} \\ \delta{x_\mathcal{L}} \\ \delta{y}
\end{bmatrix}
\end{equation}
where the subscripts of $x_\mathcal{G}$, $x_\mathcal{L}$, and $y$ indicate the partial derivatives with respect to the corresponding states and variables.
Away from saddle-node bifurcation the algebraic variables $\delta{y}$ can be eliminated from \eqref{eq:linearDAE} yielding
\begin{align} 
\label{eq:unreducedlinearDAE} \nonumber
\begin{bmatrix}
    \dot{\delta{x_\mathcal{G}}}\\
    \dot{\delta{x_\mathcal{L}}}
\end{bmatrix} &= A 
\begin{bmatrix}
    \delta{x_\mathcal{G}} \\ 
    \delta{x_\mathcal{L}}
\end{bmatrix} = 
\end{align}
\begin{align}
\begin{bmatrix}
F^\mathcal{G}_{x_\mathcal{G}}-F^\mathcal{G}_y G_y^{-1} G_{x_\mathcal{G}} & - F^\mathcal{G}_y G_y^{-1} G_{x_\mathcal{L}}\\
-\mathcal{T}^{-1}F^\mathcal{L}_y G_y^{-1} G_{x_\mathcal{G}} & \mathcal{T}^{-1}(F^\mathcal{L}_{x_\mathcal{L}} - F^\mathcal{L}_y G_y^{-1} G_{x_\mathcal{L}})
\end{bmatrix} 
\begin{bmatrix}
    \delta{x_\mathcal{G}} \\ 
    \delta{x_\mathcal{L}}
\end{bmatrix} \nonumber
\end{align}

This expression can be more conveniently decomposed as $A = \Lambda J$ in the following form
\begin{align}
A = A_\mathcal{T} \triangleq  \begin{bmatrix}
    \mathbbmtt{1} & 0 \\
    0 & \mathcal{T}^{-1}
\end{bmatrix}
\begin{bmatrix}
J_{\mathcal{G}\mathcal{G}} & J_{\mathcal{G}\mathcal{L}} \\
J_{\mathcal{L}\mathcal{G}} & J_{\mathcal{L}\mathcal{L}}\end{bmatrix}.
\end{align}
where $\mathbbmtt{1}$ is the identity matrix of size $n_\mathcal{G}\times n_\mathcal{G}$.

The key advantage of this decomposition is the separation of the matrix $A$ in an uncertain diagonal matrix $\mathcal{T}$ and the Jacobian matrix $J$ that does not depend on the uncertain load time constants, and depends only on the properties of the steady state equilibrium point defined in load and generator variables. 

Notably, for load models considered in this work the second row depends only on the steady-state behavior of the load, i.e. it can be computed given the load levels and voltage/frequency dependence of the steady-state active and reactive power consumption. 

According to the Lyapunov direct method, the system described by $\dot{x}=Ax$ is stable if and only if there exist a symmetric positive definite matrix $Q=Q^\top \succ 0$ such that
\begin{equation}
\centering
QA+A^\top Q \prec 0
\end{equation}
where superscript $\top$ is used for transpose operator.
However, existence of a $Q$ matrix for a given $A$ merely implies the system stability for some specific load dynamics. In the next section, we introduce the concept of robust stability that guarantees the stability of the system stability for any load time constant uncertainty, i.e. any positive definite diagonal matrix $\Lambda$. 

\subsection{Robust stability} \label{sec:robust}
As discussed previously, in this work, we assume that the operator has reliable information about the generator models and settings, and the corresponding Jacobian matrix row $J_\mathcal{G}$ is available for analysis. At the same time, we assume that the grid model and all the algebraic equations characterized by $G$ are known with high accuracy. For the load model we assume that the matrices $F^\mathcal{L}_{x_\mathcal{L}}$ and $F^\mathcal{L}_y$ describing the static characteristics of loads are known with high accuracy, however the matrix $\mathcal{T}$ representing the dynamic response is not. The goal of robust stability certificate is to guarantee that the operating point is stable for any positive definite $\mathcal{T} \succ 0$.

It is important to distinguish between two categories of load uncertainties, i.e. load level uncertainty and load dynamic uncertainty. The former relates to load level fluctuations due to various factors such as individual consumer behavior or variations in the production output of DGs. This type of uncertainty is considered in \cite{Vittaltrajsens2012, hou2012trajectory, Hiskens2006sens, overbye1991voltage, Chakrabortty2011}. On the other hand, load dynamic uncertainty concerns the unpredictability of the dynamic response of the load to small fluctuation in voltage and frequency. In this work, we only focus on the latter type of uncertainty and do not discuss the uncertainty in load variations assuming that the operating point is known. However, the regions of robust stability can be also used to account for uncertainty in load consumption levels. 

There are many sources of uncertainty in load dynamics. Apart from the natural uncertainty related to composition of power consumption devices, the level of uncertainty may increase dramatically in coming years when more small scale generators, i.e. DGs, are integrated to the systems, especially on the distribution level. When the penetration level becomes very high the traditional static voltage stability may be insufficient to assess the system security \cite{nguyen2014appearance, nguyen2014voltage}. On the other hand, the approach proposed in this work is valid, at least for non-synchronous DGs that can be modelled as a negative loads with dynamics in the form of \eqref{eq:geqsimple} and \eqref{eq:beqsimple}.

The robust stability criterion developed in the manuscript is directly linked to the concept of D-stability \cite{johnson1974sufficient,kaszkurewicz2000matrix} that are extended to model the uncertainty in a subset of state variables.  

In the following theorems we denote the set of positive definite matrices of size $n\times n$ as $\mathbb{P}_n$ and set of diagonal positive definite matrices of size $n \times n$ as $\mathbb{D}_n$. The following theorem is central to the robust stability certification of power systems.

\begin{theorem}
Assume that there exists block-diagonal positive definite block diagonal matrix $Q$, such that
\begin{equation}\label{eq:structure}
 Q = \begin{bmatrix}
Q_\mathcal{G} & 0 \\
0 & Q_\mathcal{L}\end{bmatrix},
\end{equation}
with positive definite matrix $Q_\mathcal{G} \in \mathbb{P}_{n_\mathcal{G}} $ and diagonal positive definite matrix $Q_\mathcal{L} \in \mathbb{D}_{n_\mathcal{L}} $ that satisfies  
\begin{equation}\label{eq:robust}
Q A_\mathcal{T} + A_\mathcal{T}^\top Q \prec 0
\end{equation}
for some $\mathcal{T} > 0 $. In this case the system is robust stable, i.e. in other words, for any diagonal $\tilde{\mathcal{T}} \in \mathbb{D}_{n_\mathcal{L}}$ there exists $\tilde Q \succ 0$ such that $\tilde{Q} A_{\tilde{\mathcal{T}}} + A^\top_{\tilde{\mathcal{T}}} \tilde{Q} \prec 0$
\end{theorem}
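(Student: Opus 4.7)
The plan is to exploit the block-diagonal structure of $Q$ together with the factorization $A_\mathcal{T}=\Lambda_\mathcal{T} J$, where I write $\Lambda_\mathcal{T}:=\mathrm{diag}(\mathbbmtt{1},\mathcal{T}^{-1})$ for brevity. The pivotal observation is that $Q$ and $\Lambda_\mathcal{T}$ commute: in the generator block both matrices reduce to identity or $Q_\mathcal{G}$, and in the load block $Q_\mathcal{L}$ and $\mathcal{T}^{-1}$ are both diagonal and therefore commute. The whole argument is built on this commutation, which is exactly why the hypothesis insists on $Q_\mathcal{L}\in\mathbb{D}_{n_\mathcal{L}}$ rather than merely $Q_\mathcal{L}\in\mathbb{P}_{n_\mathcal{L}}$.

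First I would use the commutation to strip the dependence on $\mathcal{T}$ out of the hypothesis. Define $\hat Q:=\Lambda_\mathcal{T} Q=Q\Lambda_\mathcal{T}$, which is block-diagonal with blocks $Q_\mathcal{G}\succ 0$ and $\mathcal{T}^{-1}Q_\mathcal{L}$ (still diagonal positive definite), so $\hat Q\succ 0$. Substituting $A_\mathcal{T}=\Lambda_\mathcal{T} J$ into the hypothesized inequality and pulling $\Lambda_\mathcal{T}$ across $Q$ turns it into
\begin{equation*}
\hat Q\,J+J^\top \hat Q\prec 0,
\end{equation*}
which is a Lyapunov certificate for the fixed, $\mathcal{T}$-independent matrix $J$.

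Second, for an arbitrary target $\tilde{\mathcal{T}}\in\mathbb{D}_{n_\mathcal{L}}$ I would propose the candidate $\tilde Q:=\Lambda_{\tilde{\mathcal{T}}}^{-1}\hat Q$, whose two diagonal blocks are $Q_\mathcal{G}$ and $\tilde{\mathcal{T}}\mathcal{T}^{-1}Q_\mathcal{L}$. The lower block is a product of three commuting positive diagonal matrices, hence itself positive diagonal, so $\tilde Q\succ 0$. Since $\hat Q$ and $\Lambda_{\tilde{\mathcal{T}}}$ again commute for the same structural reason, a direct calculation gives
\begin{equation*}
\tilde Q A_{\tilde{\mathcal{T}}}+A_{\tilde{\mathcal{T}}}^\top \tilde Q \;=\; \hat Q J+J^\top \hat Q \;\prec\; 0,
\end{equation*}
which is exactly the desired robust Lyapunov inequality.

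The main intellectual step, and the part I would single out as the crux, is spotting the commutation that decouples the Lyapunov inequality from $\mathcal{T}$; once that is seen, the construction of $\tilde Q$ is essentially forced and the rest is bookkeeping. The minor points I would verify carefully are that $\tilde Q$ remains positive definite for every $\tilde{\mathcal{T}}\succ 0$ (immediate from the all-diagonal structure of the lower block) and that the strict inequality really carries over unchanged, which it does because the final expression is literally equal to the hypothesized one, not merely related to it by a congruence.
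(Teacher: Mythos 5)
Your proposal is correct and is essentially the paper's own argument: the paper directly takes $\tilde Q = Q\,\mathcal{T}\,\tilde{\mathcal{T}}^{-1}$ (with $\mathcal{T},\tilde{\mathcal{T}}$ extended by the identity on the generator block), which coincides with your $\Lambda_{\tilde{\mathcal{T}}}^{-1}\Lambda_{\mathcal{T}}Q$, and verifies $\tilde Q A_{\tilde{\mathcal{T}}}=QA_{\mathcal{T}}$ using the same commutation of the diagonal load block that you identify as the crux. Your detour through $\hat Q$ and the explicit factorization $A_{\mathcal{T}}=\Lambda_{\mathcal{T}}J$ is just a slightly more explicit bookkeeping of the identical idea.
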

\begin{proof}
Consider the matrix $\tilde Q = \tilde Q^\top = Q \mathcal{T} \tilde{\mathcal{T}}^{-1}$. Due to block-diagonal structure of $Q$ we have $\tilde Q A_{\tilde{\mathcal{T}}} = Q A_\mathcal{T}$ and at the same time $A_{\tilde{\mathcal{T}}}^\top \tilde Q = A_\mathcal{T}^\top Q $, so $\tilde Q A_{\tilde{\mathcal{T}}} + A_{\tilde{\mathcal{T}}}^\top \tilde Q  = Q A_\mathcal{T} + A_\mathcal{T}^\top Q \prec 0$.
\end{proof}

Note, that the condition \eqref{eq:robust} first reported in the framework of D-stability \cite{johnson1974sufficient, kaszkurewicz2000matrix} only establishes a sufficient criterion for robust stability. To our knowledge no computationally tractable necessary and sufficient criteria reported for D-stability have been reported in the literature. The only exception is the set of results on the so-called positive matrices \cite{knorn2009linear} for which the existence of diagonal Lyapunov function is a necessary condition for stability. Positive matrices are characterized by negative off-diagonal components. The question of whether they can be used to describe power system dynamics is interesting and worth exploring, but is outside of the scope of this study. 

The problem of checking whether the block diagonal matrix $Q$ exists for given $A_\mathcal{G}$, $A_\mathcal{L}$  and $\mathcal{T}$ is easy and can be accomplished by solving the following semi-definite programming (SDP) problem.
\begin{align}\label{eq:SDP}
& \max_{Q} \rho \\ \nonumber
\textrm{subject to:  } & Q A_\mathcal{T} + A_\mathcal{T}^\top Q + \rho \mathbbmtt{1} \prec 0 \\
& Q \succ 0 \nonumber \\
& \textrm{tr}(Q) = 1. \nonumber
\end{align}
Here the optimization is carried over the matrices $Q$ with structure defined in \eqref{eq:structure}. The condition $\textrm{tr}(Q) $ fixes the overall normalization of the Lyapunov function. Whenever the resulting value $\rho$ is positive the system is guaranteed to be robust stable. The complexity of this procedure is polynomial in the size of the system. In recent years mathematically similar procedures have been successfully applied in the context of optimal power flow approaches \cite{jabr2006radial,lavaei2010convexification}, and more recently for power system security assessment purposes \cite{molzahn2013sufficient}. It has been shown in a number of papers, that even large scale systems admit fast analysis with SDP algorithms \cite{lesieutre2011examining}. 

However, from \eqref{eq:loadeq}, one can see that the proposed robust stability criterion requires the equilibrium to be independent on uncertain parameters, for example the time constants of the loads. Fortunately, the standard control systems in generators and other components normally satisfy this requirement. This can be seen by looking at the equations for the system equilibrium point, like load flow equations and observe that they don't depend on the dynamic time constants of governors, AVR and loads. 

In this work we illustrate the approach by considering the load dynamic uncertainties. In real power systems, the dynamics of generators and Flexible AC Transmission Systems (FACTS) devices are also the sources of uncertainties \cite{Kundur,anderson1977power,Vittal2002,chakrabortty2008estimation}. The generators and the system uncertainties cause much difficult in designing effective Power System Stabilizer (PSS) and other controllers \cite{4075245,372584}. As mentioned before, as long as these uncertainties do not alter the system equilibrium, the proposed robust stability criterion can be applied to access the system stability. In this case, all known dynamic components can be grouped in set $\mathcal{G}$ and unknown dynamic ones can be classified in set $\mathcal{L}$.

\section{Proposed applications} \label{sec:applications}
In this section we discuss the possible applications of the mathematical techniques explained above. 

\subsection{Dynamic Security Assessment (DSA)}
DSA are used to analyze the security of power systems and assess various types of stabilities such as voltage stability in Voltage Stability Assessment (VSA) and transient stability which is assessed in Transient Stability Assessment (TSA). The configuration of the DSA integrated into the Energy Management System (EMS) is discussed in details in \cite{KundurDSA}. Depending on the purpose of the assessment and the time-scale of the function of interest, the input of DSA may be different. Typical DSA assess the stability of a given operating state determined either from Supervisory Control and Data Acquisition (SCADA) or Phasor Measurement Unit (PMU) measurement tools or constructed in framework of scenario analysis for planning or operation purposes. Being a fundamental component of DSA toolbox, the main goal of VSA is to certify pre- and post-contingency voltage stability and calculate the voltage stability margin. The contingency set typically includes major equipment outages such as generator, transformer, line tripping. $N-1$ security set is normally of interest \cite{savulescu2009real,KundurDSA,fouad1988dynamic}.

Brute-force accounting for load dynamics and other uncertainties in traditional VSA is computationally expensive due to large number of scenarios that need to be analyzed. An alternative proposed here and discussed in more details in section \ref{sec:RSA} is based on the worst case scenario analysis and relies only on the analysis of static characteristics of the loads and well-understood dynamic characteristics of the generators. Hence it eliminates the need for computationally expensive dynamic simulations and stochastic Monte Carlo approached to modeling the uncertainty. 

Typically, the objective of the DSA module is to assess the system stability margins and its behavior in major contingencies. At the input, the DSA module admits a scenario which includes: i) a power flow base case which describes a snapshot of the system conditions; ii) dynamic data of the system; iii) set of critical disturbances. The output from the DSA module is composed of the system stability and corresponding margins. The work \cite{huang2002intelligent} describes DSA in more details from the perspectives of both traditional approaches in off-line analysis as well as intelligent system (IS) based one for on-line assessments.

It is worth to distinguish the two main classes of security assessment, i.e. Static Security Assessment (SSA) and DSA. SSA concerns whether the operating constraints are satisfied, i.e. whether the post-contingency voltage lies within the acceptable range, whereas DSA looks for the system stability. In some cases, acceptable voltage levels may imply that the system is stable. However, in general, this relationship is not such simple. Therefore, the system stability needs to be assessed thoroughly in the framework of DSA.

\subsection{Robust Stability Assessment}\label{sec:RSA}

The algorithms developed in this work can form the foundation of a potentially more powerful technique that we call Robust Stability Assessment (RSA). Specifically we propose to use RSA to develop the fast screening phase of VSA in an online DSA that is required to be fast enough to either automatically or manually choose the proper remedial control actions. For an effective and powerful VSA, the accuracy and the speed of computation are the two most crucial and challenging issues. As previous mentioned, the accuracy of VSA is affected due to uncertainties. RSA is able to eliminate such errors. Moreover, the fast algorithm of RSA is extremely helpful to speed up the program, especially when it relies on deterministic method that exhaustively screens contingency and searches for secure limits. Even for intelligent system based VSA, RSA is still able to help to remove a significant number of possibilities. The efficiently computational aspect of the proposed algorithm can be easily scale to bulk systems which is impossible for traditional dynamic approaches while rendering the meaning of dynamic stability assessment.

Within this approach in RSA, the stability is certified not for a single mathematical model of a system, but rather for the whole set of systems defined by different realizations of uncertain elements. The key steps required for performing the Robust Stability Assessment are explained below:

\begin{enumerate}
\item \textbf{Input} 
The input of RSA is an equilibrium configuration of the system characterized by the levels of load consumption, network model, and dynamic model of the generators.
\item \textbf{Initialization} On the initialization stage the algorithm defines the model of the system by introducing the uncertain model of the load. In the simplest approach the load buses are modeled as time dependent impedances as discussed in section \ref{sec:LoadModeling}. In the framework of more advanced approaches it may be reasonable to separate the actual loads into static components, well-defined dynamic ones (like aluminum smelters) and finally the uncertain dynamic loads. Only the uncertain components should be incorporated in the $x_\mathcal{L}$ part of the dynamical system descriptions, whereas all the other components should be modeled as known ones and described by the vector $x_\mathcal{G}$.

\item \textbf{Linearization} The dynamic model of the system is linearized and the matrix $A_\mathcal{T}$ is calculated for some arbitrarily chosen load relaxation time constants matrix $\mathcal{T}$. As explained in previous section the choice of initial $\mathcal{T}$ does not have any effect on the outcome of the analysis.
\item \textbf{Optimization} The Semi-Definite Programming problem \eqref{eq:SDP} is solved for the constructed matrix $A_\mathcal{T}$. Whenever the resulting value $\rho$ is positive the equilibrium point is certified to be robust stable, i.e. it is provably stable for any matrix $\mathcal{T}$.
\item \textbf{Direct Analysis} As the condition $\rho>0$ from \eqref{eq:SDP} is only sufficient but not necessary, whenever the result of optimization results in negative $\rho$, nothing can be said about the stability of the system. The user of RSA has to rely on other probabilistic or deterministic techniques to assess the probability of having stable system given the uncertainty in load dynamics.
\end{enumerate}
RSA can be naturally incorporated in several planning and operational studies that are described below.

\subsection{RSA for deterministic stability assessment}
One specific application of the RSA approach is the deterministic stability assessment that is regularly performed during power system operation. At any moment of time, the system operators need to know the following \cite{KundurDSA}.

\begin{enumerate}
\item Whether the current state is secure
\item Whether the system will remain secure after the next several minute changes
\item If the system is insecure, what countermeasures need to be carried out
\end{enumerate}

The general deterministic stability assessment answers these questions via the following sequence of steps \cite{mccalley2004probabilistic}:

\begin{enumerate}
\item Develop the power flow base cases for the study
\item Select the contingency set
\item Select parameters in the expected operating range \label{paramstep}
\item Identify security constraint violations
\item Find the security boundary
\item Construct the comprehensive reports like plots or tables by combining all the security boundaries
\end{enumerate}
Robust stability technique naturally fits in this process without any  adjustments to the logic. The key advantage of the RSA is its ability to certify the stability and security of the system even in the presence of dynamic uncertainty naturally expressed as parameter ranges in step \ref{paramstep}) above. The proposed robust stability criterion is compatible with both off-line and on-line security assessments in the presence of uncertainties. Moreover, it may also provide additional benefits for implementing real-time and distributed security assessment schemes which are still the main challenge to the current technologies \cite{fouad1988dynamic}. In this framework, the assessment has to be performed without access to full model of external entities, and the operator may represent the dynamic response of these entities via equivalent models with uncertain time-constants. Such a scheme is more robust to communication system malfunctions and potentially reduces the requirements to throughput and latency of sensing, communication and computation components. In some cases, large enough robust stability region can be directly applied in operation procedures and used as secure regimes that are displayed to the operators. Moreover, as mentioned before, RSA can access the system dynamic stability simply based on static analysis (power flow) and well-understood dynamic components, the dynamical secure regimes can be constructed in advance. Specific demonstration of the usage of robust stability in VSA is presented in section \ref{sec:WSCC} where we examine the $N-1$ contingency set of WSCC $3$-machine, $9$-bus system.

\subsection{Security Indicator} \label{sec:SI}
The optimization problem \eqref{eq:SDP} can be used not only to certify the stability of a given point but also to estimate the stability margin. Indeed, the value of $\rho$ is naturally interpreted as the worst case rate of decay of the Lyapunov function defined by $x^\top Q x$ and can be thus viewed as the worst case stability margin. The security indicator defined by $\rho$ can be used for risk monitoring purposes and can assist the system operators in designing the preventive control strategies. In the latter it is natural to optimize for control actions that ensure some minimal level of worst-case stability margin.

With additional research effort invested it should be possible to modify the security indicator defined by $\rho$ from \eqref{eq:SDP} in a way that it's value reflects the probabilities of system losing stability in the presence of random factors, such as renewable generators. To achieve this goal it is necessary to study the sensitivity of matrix $A$ with respect to random factors, and modify the term $\rho\mathbbmtt{1}$ in a way that certificate that bounds $\rho$ from below can be interpreted in probabilistic way, i.e. probability of system losing stability bounded from above.

\subsection{Stability constrained Planning and Optimization}

RSA and security indicator discussed in section \ref{sec:SI} can be also used for planning and dispatch purposes in the framework of stability or security constrained optimization. In this case the security indicator can be used as one of the optimization objectives or constraints. As closed form expression for $\rho$ does not exist, the corresponding optimization needs to rely on some iterative heuristics, like genetic algorithms. The algorithms may need to be complemented with direct approaches as described for example in \cite{makarov1998general, dobson2003distance, makarov2000computation, Hiskens2005limit}.

\section{Simulations} \label{sec:simulation}
In this section we report the results of application of the Robust Stability Certification to several common models of power systems. Moreover, RSA technique does not explicitly address the question of feasibility of the operating point, although it could be trivially extended with any kinds of voltage and current constraints. As these constraints depend on the operating point, and not on the dynamic equations, they can be checked separately from the small signal stability. Whenever the small-signal stability of the operating point needs to be analyzed, and RSA technique allows to assess stability even in the presence of load modeling uncertainty. As a matter of fact, in contingency analysis, it is essential to assess the system stability even when the voltage levels are unacceptable according to normal operating conditions.

\subsection{A $2$-bus system}

\begin{figure}[!ht]
    \centering
    \includegraphics[width=0.8 \columnwidth]{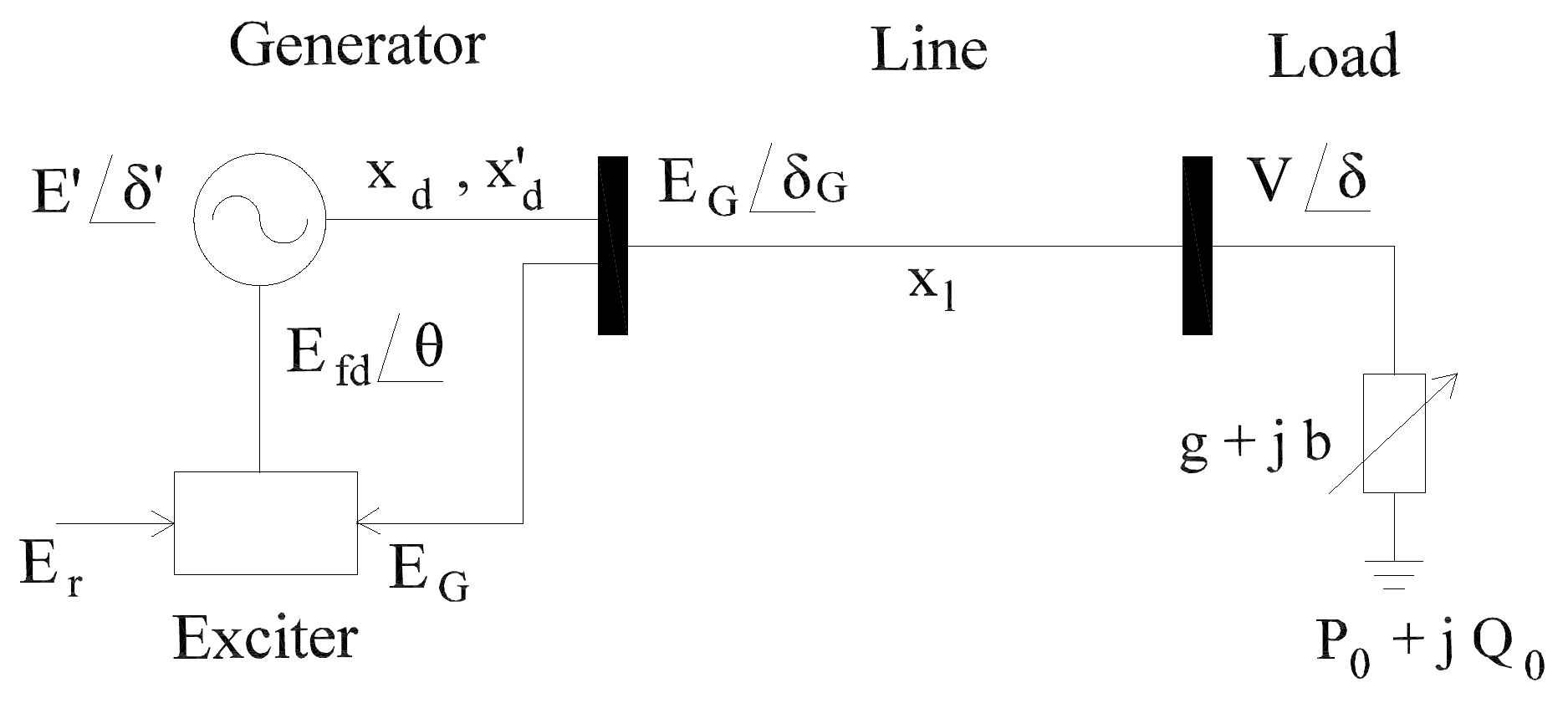}
	\caption{A rudimentary system \cite{Venkatasubramanian1992}}
     \label{fig:rudi}
\end{figure}

The rudimentary 2-bus system shown in Figure \ref{fig:rudi} is adopted from \cite{Venkatasubramanian1992} and is extended with the dynamic model of the loads. The generator consists of an internal voltage source behind the transient reactance and an IEEE Type $1$ exciter. In this work, we do not consider angle dynamics but focus solely on voltage dynamics, although the extension to more general models is trivial. The set of differential equations describing the generator dynamics are the same as described in \cite{Venkatasubramanian1992} or \cite{PaiDynamics}:
\begin{equation} \label{eq:Ep}
\centering
T'_{d0}\dot{E'} = -\frac{x_d}{x'_d} E'+\frac{x_d-x'_d}{x'_d}\,E_{G}\,\cos(\delta_{G}-\delta')+E_{fd}
\end{equation}
\begin{equation} \label{eq:Efd}
\centering
T\dot{E}_{fd} = -E_{fd} - K(E_{G}-E_r)
\end{equation}
where $x_d$ and $x'_d$ are the equivalent direct axis reactance and transient direct axis reactance; $T'_{d0}$ is the direct axis transient open circuit time
constant; $E'\angle \delta'$ is the internal source voltage; $E_G\angle \delta{_G}$ is the terminal voltage; $E_R$ is the reference voltage; $E_{fd}$ is the exciter output voltage (generator field voltage); $K$ and $T$ are the gain and integral time constant associated with exciter PI control. Generator models are described in details in \cite{Kundur, Heydtreport, exciterreport}.
The dynamics of the load is described by \eqref{eq:geqsimple}:

\begin{align}\label{eq:rudiload}
\tau \dot{g} = - (g V^2 - P_0) = -(p-P_0), \\
\tau \dot{b} = - (b V^2 - Q_0) = -(q-Q_0).
\end{align}
where $\tau$ is the load time constant, $\tau=\tau_{g}=\tau_b$; $V$ is the voltage magnitude at the load bus; $P_0=P^S$ and $Q_0=Q^S$ are the desired demand levels that we assume to be constant and not depending on the voltage; $p$ and $q$ are the instantaneous power consumptions of the load. For the rudimentary system, the set of state variables includes $4$ states, i.e. $x=[E',E_{fd},g,b]^\top$ which can be decomposed into $2$ state vectors $x_\mathcal{G}=[E',E_{fd}]^\top$ and $x_\mathcal{L}=[g,b]^\top$. Moreover, the diagonal matrix constituted by the time constants of the loads is $\mathcal{T}=\mathrm{diag}(\tau,\tau)$. The relations (\ref{eq:Efd}) and (\ref{eq:rudiload}) form the set of differential equations in (\ref{eq:DAEs}).

Algebraic equations, $G(x,y) = 0$ are composed of relation describing the generator, the network, and the load can be stated as follow:
\begin{align}
0 =& \frac{E'E_{G}}{x'_d}\sin(\delta{_{G}}-\delta')+\frac{E_{G}V}{x_{l}}\sin(\delta_{G}-\delta)\\
0 =& \frac{1}{x'_d}(E_{G}^2-E_{G}E'\cos(\delta{_{G}}-\delta{'}))\\
&+\frac{1}{x_{l}}(E_{G}^2-E_{G}E\cos(\delta{_{G}}-\delta))\\
0 =& \frac{VE_{G}}{x_{l}}\sin(\delta-\delta{_{G}})+p\\
0 =& \frac{1}{x_{l}}(V^2-E_{G}E\cos(\delta-\delta{_{G}}))+q\\
p =& gV^2\\
q =& bV^2
\end{align}

The internal voltage source angle is used as the reference, i.e. $\delta'=0$. The system parameters are given as the following: $T'_{d0}=5$; $E_r=1$; $x_d=1.2$; $x'_d=0.2$; $T=0.39$; $K=10$; $x_{l}=0.1$. All parameters are in $p.u.$ except time constants in second and scalar gain $K$.

\begin{figure}[!ht]
    \centering
    \includegraphics[width=\columnwidth]{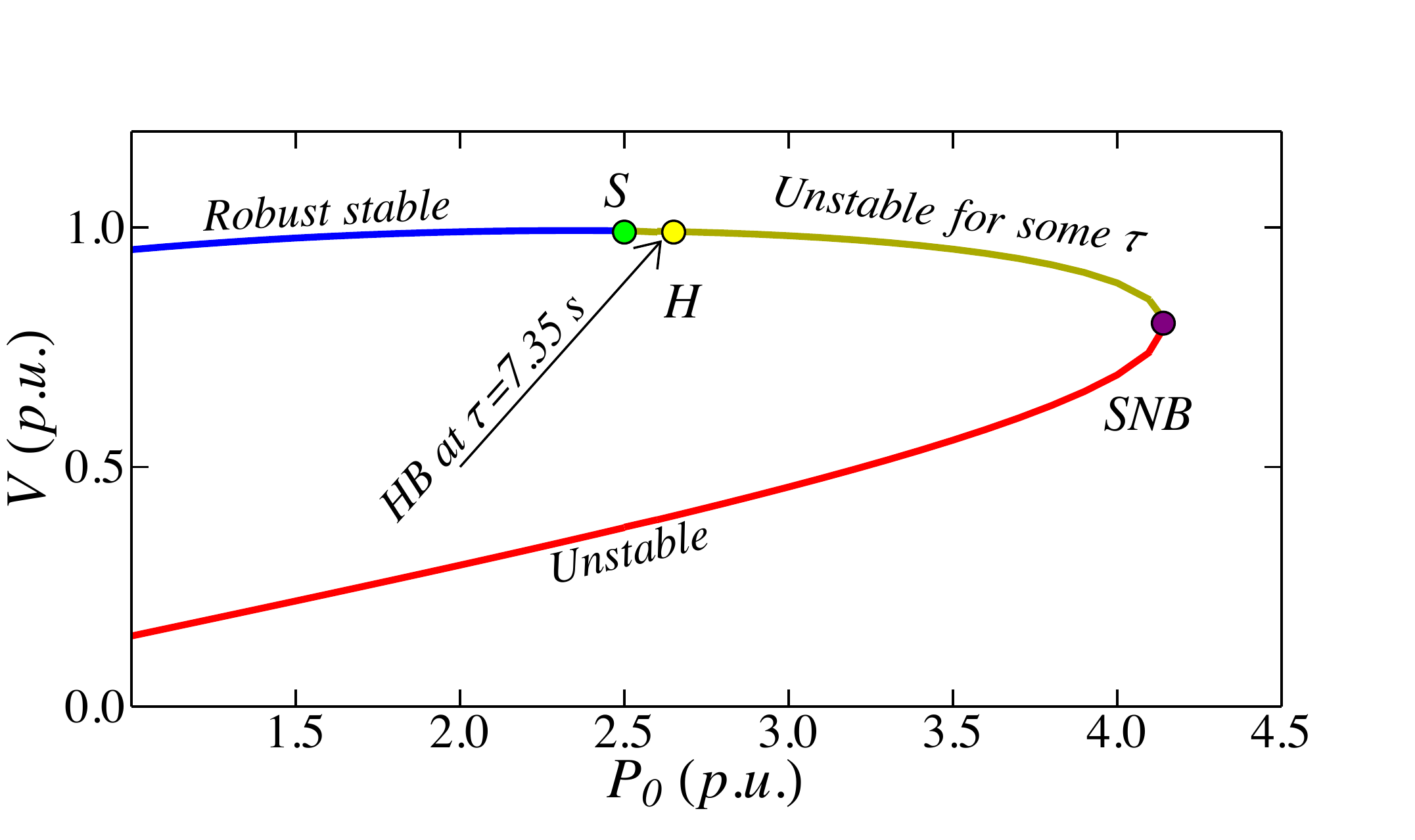}
	\caption{Robust stability illustration for rudimentary system}
     \label{fig:rudiPV}
\end{figure}

In Figure \ref{fig:rudiPV} we show the results of stability analysis of different points on the nose curve. The system is shown to be robust stable up to point $S$ where $P_0=2.51\, p.u.$ at the upper branch of the nose curve of $\cos\phi=0.98$. Saddle-node bifurcation ($SNB$) corresponding to voltage collapse occurs at $P_0=4.2\,p.u.$. The section of the upper branch between $S$ and $SNB$ cannot be certified to be robust stable, and can be numerically shown to be unstable for some load time constant $\tau$ at every point. For example, at point $H$ where $P_0=2.6 \, p.u.$, the system exhibits Hopf bifurcation ($HB$) with $\tau=7.35\,s$. The eigenvalues of matrix $A$ at point $H$ are shown on figure \ref{fig:rudiEig}.

For the rudimentary system, the lower branch of the $PV$ is unstable for most of load dynamics.

\begin{figure}[!ht]
    \centering
    \includegraphics[width=0.8 \columnwidth]{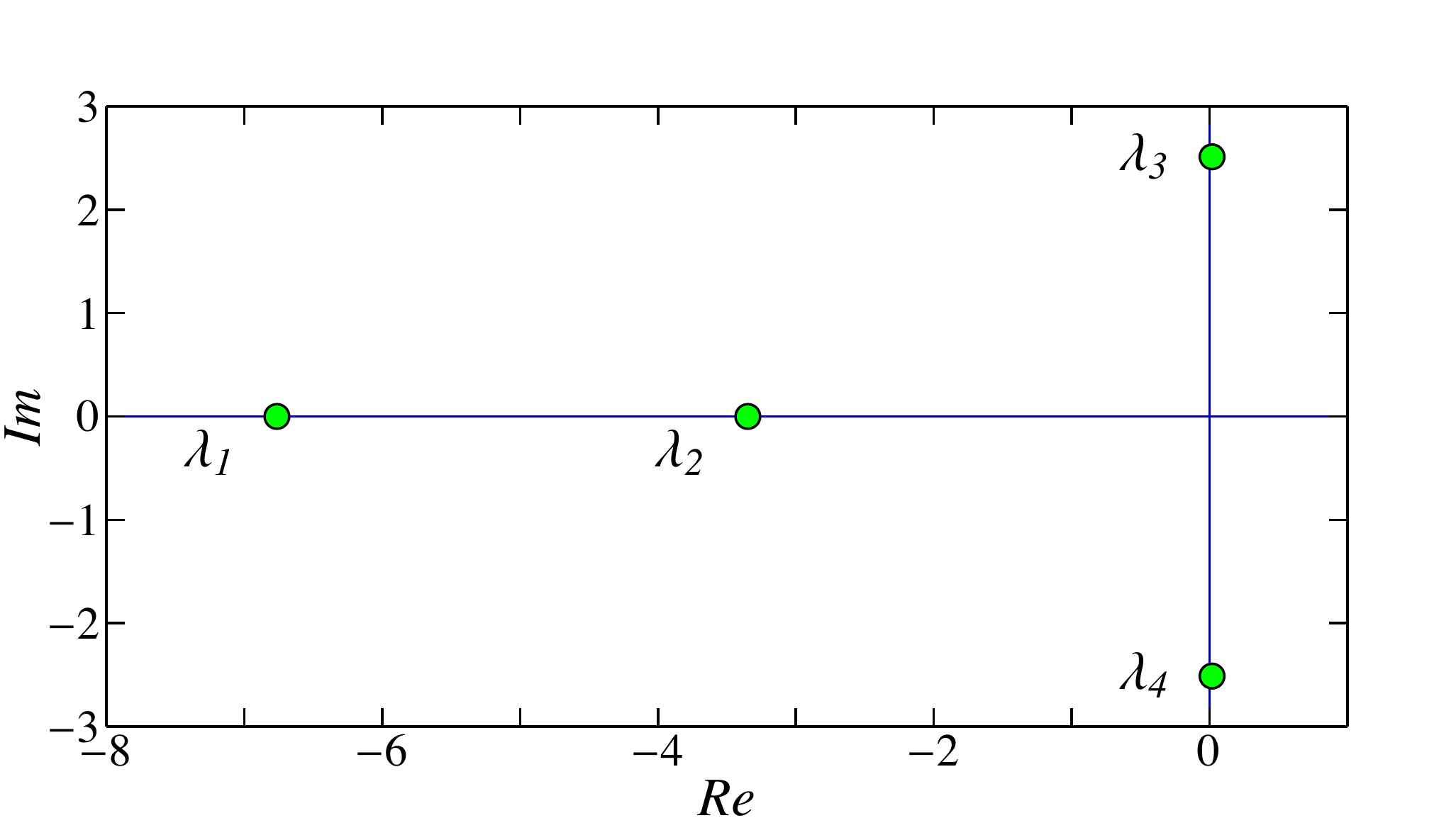}
	\caption{The eigenvalues of $A$ matrix of rudimentary system encountering Hopf bifurcation}
     \label{fig:rudiEig}
\end{figure}

\subsection{The WSCC $3$-machine, $9$-bus system} \label{sec:WSCC}

\begin{figure}[!ht]
    \centering
    \includegraphics[width=1 \columnwidth]{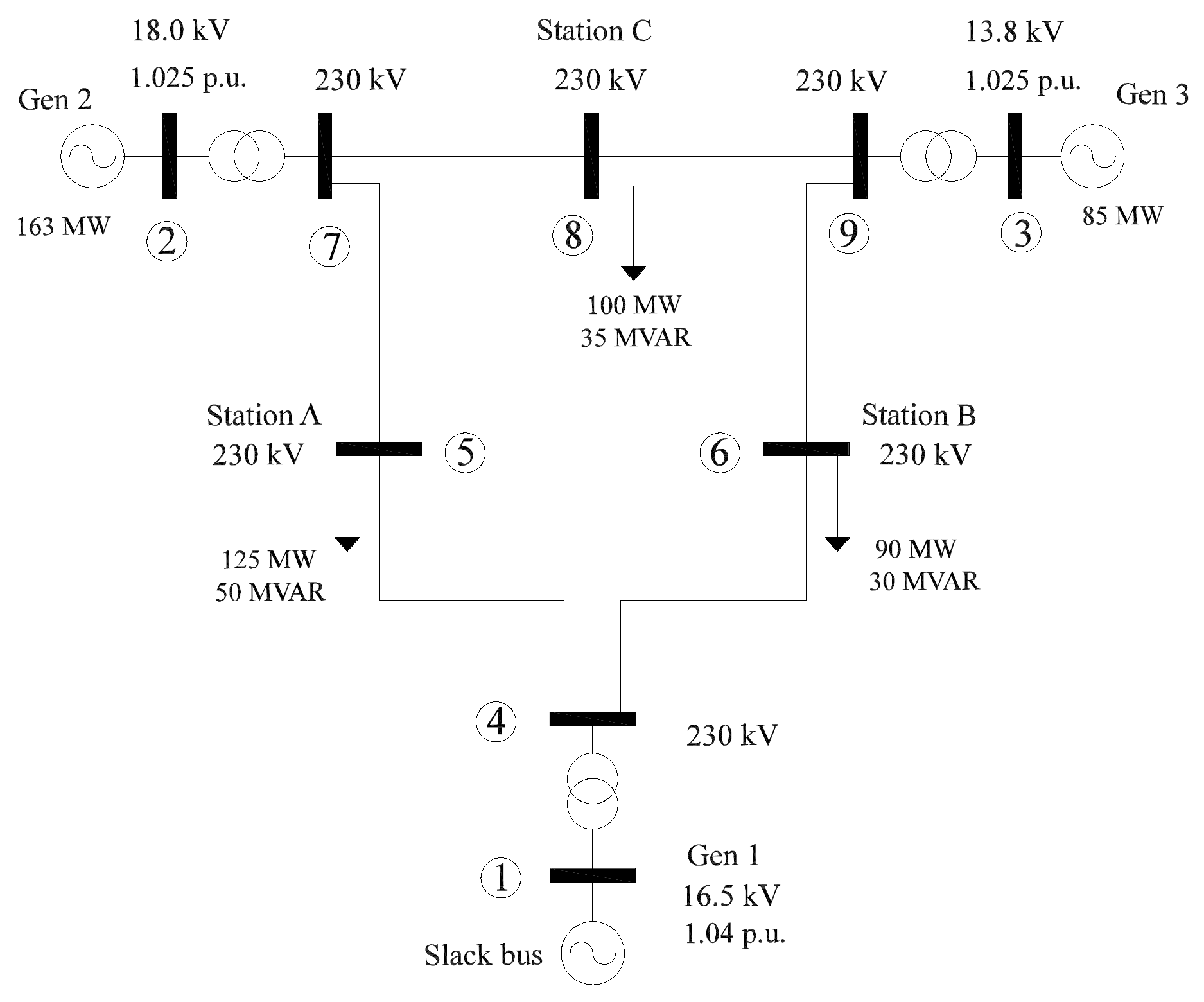}
	\caption{The WSCC $3$-machine, $9$ bus system \cite{PaiDynamics}}
     \label{fig:WSCC}
\end{figure}
The WSCC $3$-machine, $9$-bus system with all the parameters is plotted in Figure \ref{fig:WSCC}. Bus $1$ is the slack bus, and bus $2$ and $3$ are $PV$ buses with specified the active power outputs and the magnitude of voltages at the terminals. Three $PQ$ loads are connected to $3$ substations residing at buses $5$, $6$, and $8$. The base power is $S_{base}=100\,MVA$. We assume that load bus $8$ works with a constant power factor, i.e. $\cos\phi_8=0.894$. All branches and transformers data are described in Appendix \ref{app:9busdata}.

To characterize the stability of the system we increase the load at bus $8$ while keeping the other parameters fixed. The system is robust stable up to point $S$ where $P_8=3.0\,p.u.$. The region from $S$ to SNB where saddle-node bifurcation happens at $P_8=3.5\,p.u.$, the system may become unstable for some time constants. For example, fixed time constant of load $5$ and $6$ to be equal $1\,s$, the system encounters Hopf bifurcation at point $H_1$ where $P_8=3.36\,p.u.$, $\tau_8=15.57\,s$, or at point $H_2$ where $P_8=3.45\,p.u.$, $\tau_8=11\,s$.

\begin{figure}[!ht]
    \centering
    \includegraphics[width=1 \columnwidth]{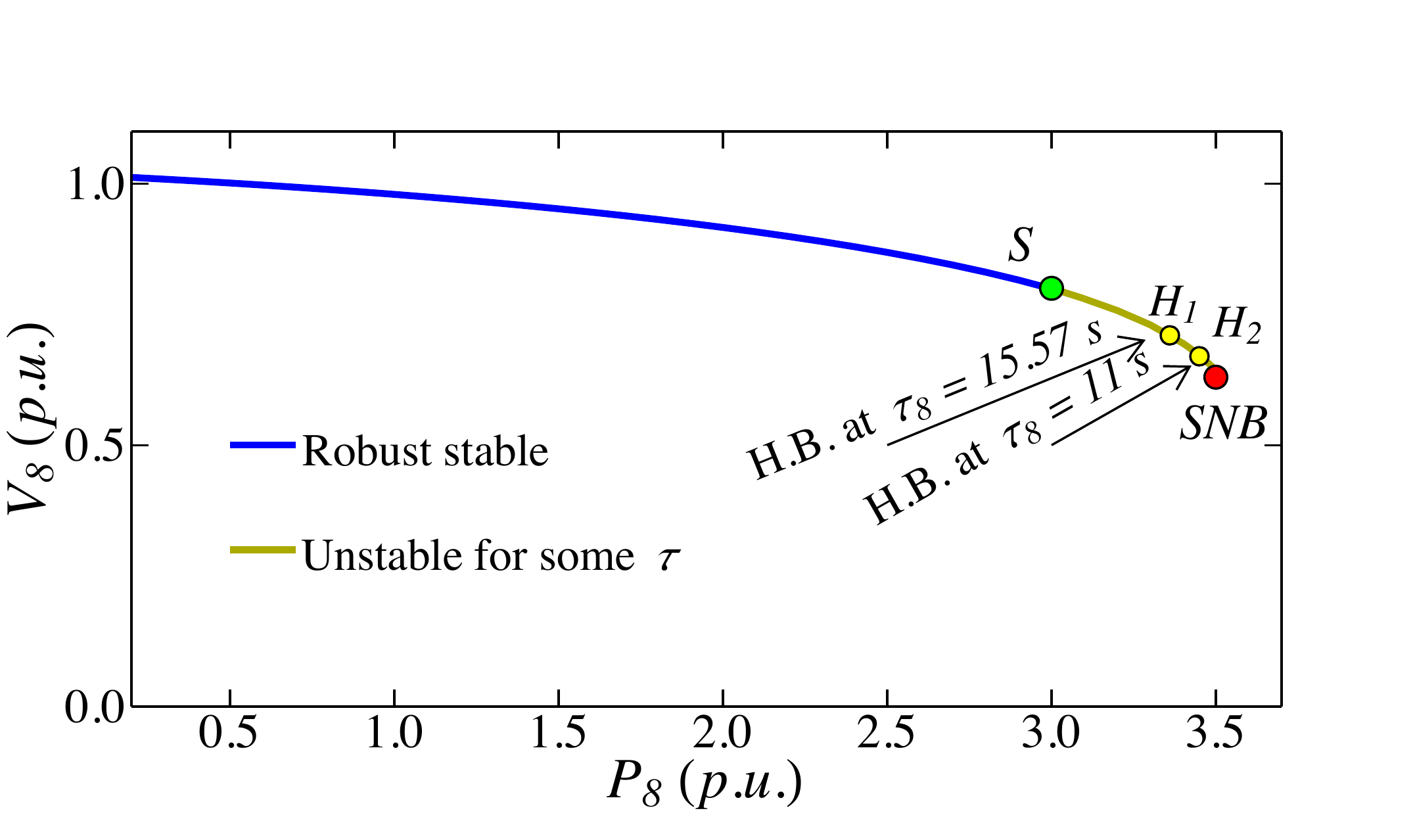}
	\caption{Robust stability illustration for WSCC $3$-machine, $9$-bus system}
     \label{fig:WSCCPV}
\end{figure}

\begin{figure}[!ht]
    \centering
    \includegraphics[width=1 \columnwidth]{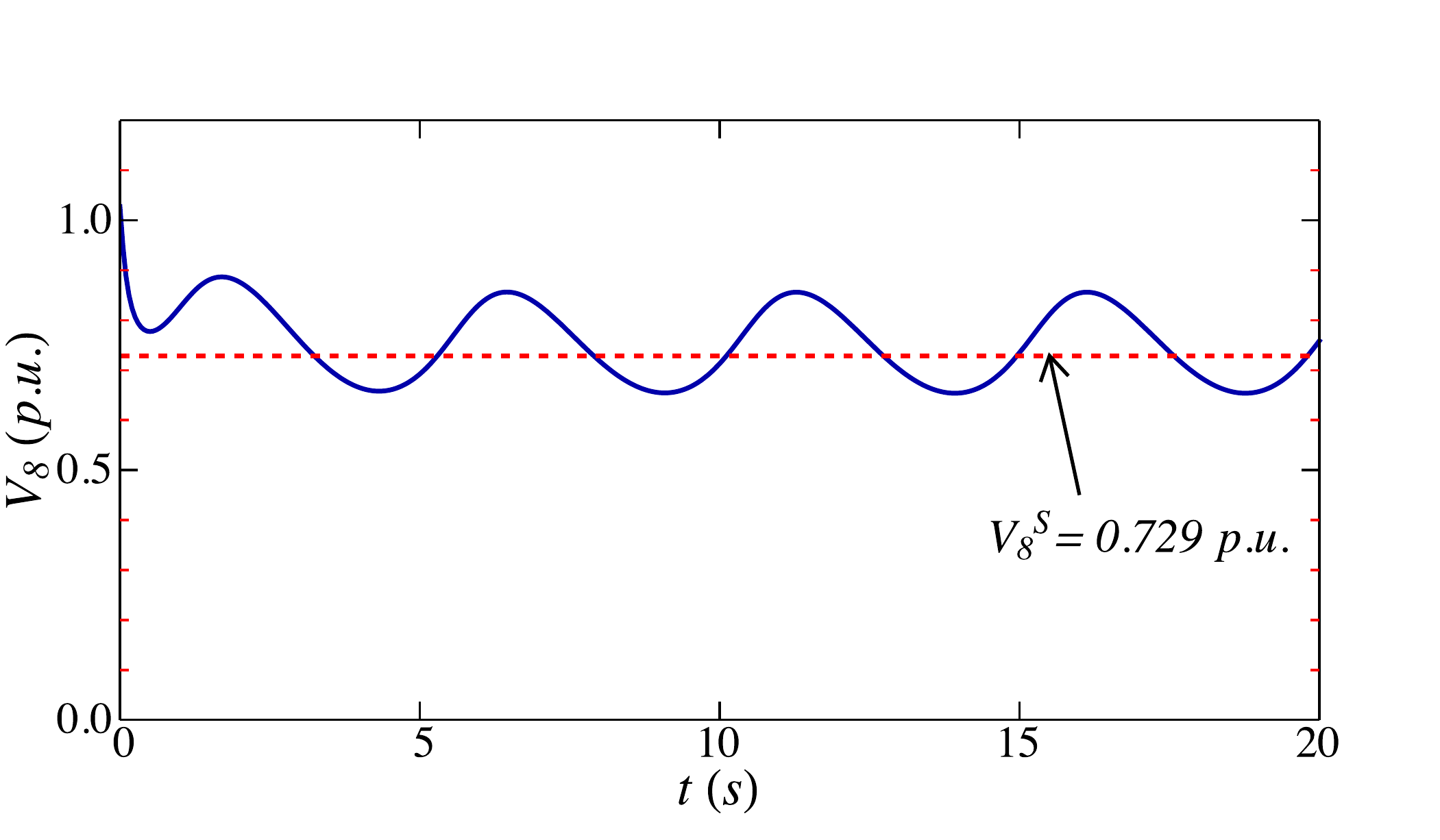}
	\caption{Oscillatory voltage instability with the WSCC $3$-machine, $9$-bus system at $H_2$ where $P_8=3.45\,p.u.$ and $\tau_8=11\,s$}
     \label{fig:WSCCH2}
\end{figure}
In Figure \ref{fig:WSCCPV}, $V_8^S$ is the voltage level when the system is stable for the same level of power consumption, i.e. $P_8=3.45\,s$ but with smaller time constant, say $\tau_8=9\,s$. For less uncertain systems, i.e. when load buses $5$ and $6$ have fixed $\tau_g=\tau_b$, point S may extent to higher level of active power at bus $8$, $P_8=3.1\,p.u.$. This observation is true for more general situations, i.e. the less uncertainty presents in the system, the more stable the system is. 

\begin{figure}[!ht]
    \centering
    \includegraphics[width=1 \columnwidth]{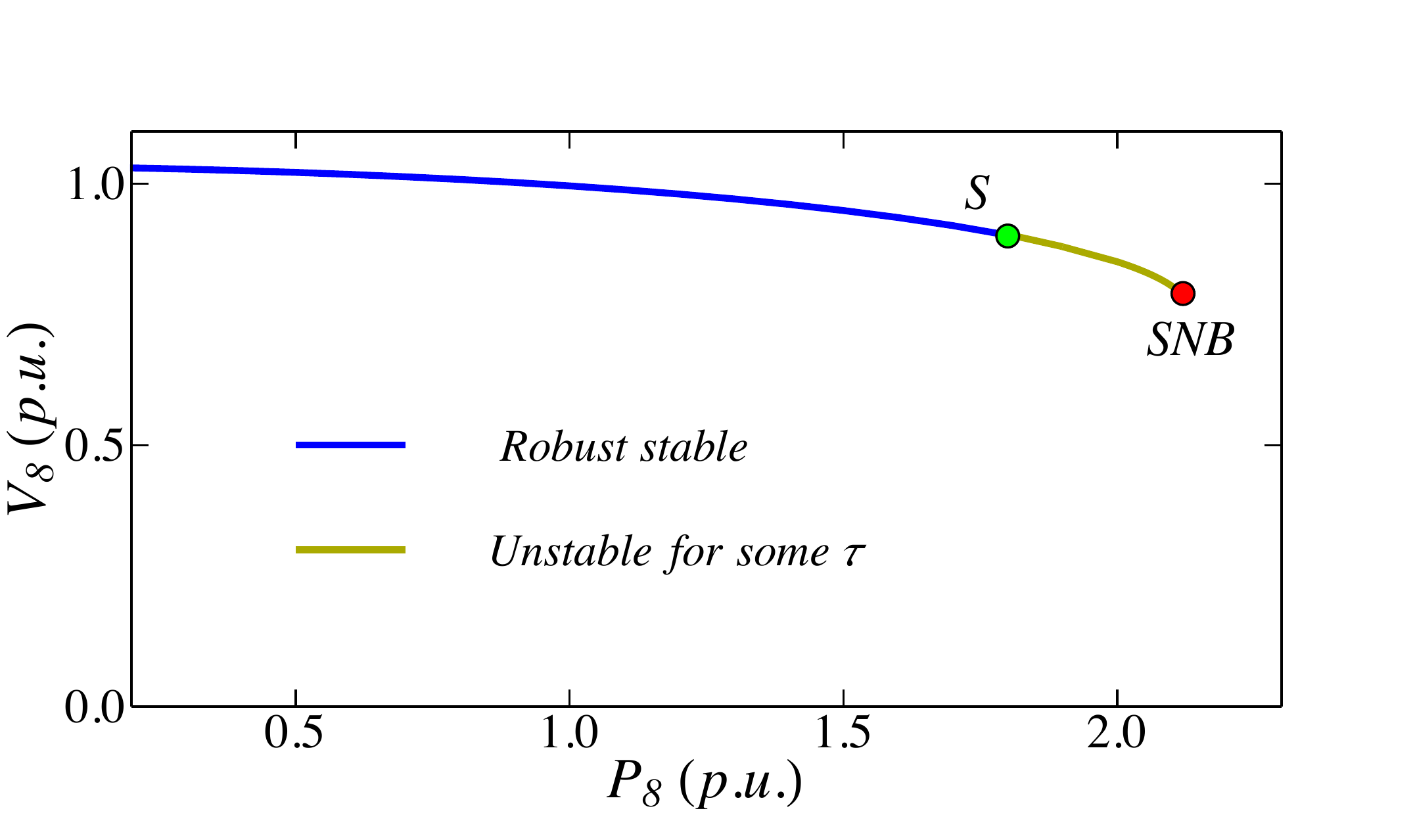}
	\caption{Robust stability illustration for WSCC $3$-machine, $9$-bus system, correlated loading condition}
     \label{fig:WSCCPV_loading}
\end{figure}

Also, we consider a more realistic loading scenario with correlated loading condition. We consider the case when $P_5=P_6=P_8$ and $Q_5=Q_6=Q_8$. Again, the $PV$ curve shown in Figure \ref{fig:WSCCPV_loading} indicates the robust stability region in blue where $P_8\leq1.86\,p.u.$ and the yellow region, from point S to SNB, where the system may become unstable for some instant relaxation times of the loads. Figure \ref{fig:WSCCPV_loading} resembles Figure \ref{fig:WSCCPV} where no correlated loading scenario is considered. They differ only in loading conditions at the robust stable point, S, and the saddle-node bifurcation. The lower critical loading conditions are observed because the power transferred through power lines increases faster when all buses are loaded at once. Different correlated loading scenarios considered but not reported in the manuscript were characterized by qualitatively similar results as shown in either Figure \ref{fig:WSCCPV} or Figure \ref{fig:WSCCPV_loading}. In the follow-up section \ref{39bus} we also report similar studies with more realistic economic load dispatch scheme that accounts for distribution of the load increase between different generators \cite{PaiDynamics}. The behavior observed in that scenario is also qualitatively similar.

\subsubsection{RSA for WSCC $3$-machine $9$-bus system} \label{sec:VSAexample}
As mentioned before, in this subsection we demonstrate the application of robust stability applied to RSA within $N-1$ security assessment. Different from off-line assessment in which an exhaustive list of contingencies is assessed, here we only consider a set of most dangerous contingencies. This practice, indeed, is more suitable for online assessment. The subset of considered contingencies may include the lines with large power flows or the lines that are connected to low voltage buses \cite{LiuVSA2000}. 
The base case power flow is chosen as shown in Figure \ref{fig:WSCC} except for load bus $8$, where $P_8 = 1.8 \,p.u.$, $Q_8=0.5\,p.u.$. For the WSCC $3$-machine $9$-bus system, all the voltage levels are close to $1\,p.u.$. Therefore, we rely on the total MVA power flows through the line to determine the most dangerous ones.

\begin{table}[ht]
    \caption{Contingency analysis summary table}
    \centering
    \label{table:contingency}
\begin{tabular}{|l||*{4}{c|}}\hline
\textbf{Line trip}
&\makebox[3em]{$1-4$}&\makebox[3em]{$2-7$}&\makebox[3em]{$7-8$}&\makebox[3em]{$9-3$}
\\\hline\hline
Case \rom{1} &Stable&Stable&Stable&Stable\\\hline
Case \rom{2} &Limit Cycle&Stable&Stable&Stable\\\hline
Case \rom{3} &Unstable&Unstable&Limit Cycle&Stable\\\hline\hline
\textbf{RSA} &NRS&NRS&NRS&RS\\\hline
\end{tabular}
\end{table}

There are two different situations in contingency analysis, i.e. with uncertainty or without uncertainty. When there is no uncertainty in the model,  consider $3$ different cases of fixed time constants at bus $5$, $6$, and $8$; i.e. $\tau_{5}=\tau_{6}=\tau_{8}=\tau$, and $\tau=1\,s$ in Case \rom{1}, $\tau=5\,s$ in Case \rom{2}, $\tau=10\,s$ in Case \rom{3}. The absolute values of the instant relaxation time are not important because the actual set of the time constants of the loads may vary over time and may be different from bus to bus. Therefore, the $3$ cases are used merely to demonstrate the performance of robust stability analysis. In contrast, we use RSA in the presence of uncertainty. For each dangerous contingency and such time constants, the system stability is assessed as shown in Table \ref{table:contingency}. 
\begin{figure}[htb]
    \centering
    \subfigure[Trip line $1-4$, Limit Cycle]{{\includegraphics[width=4.5cm]{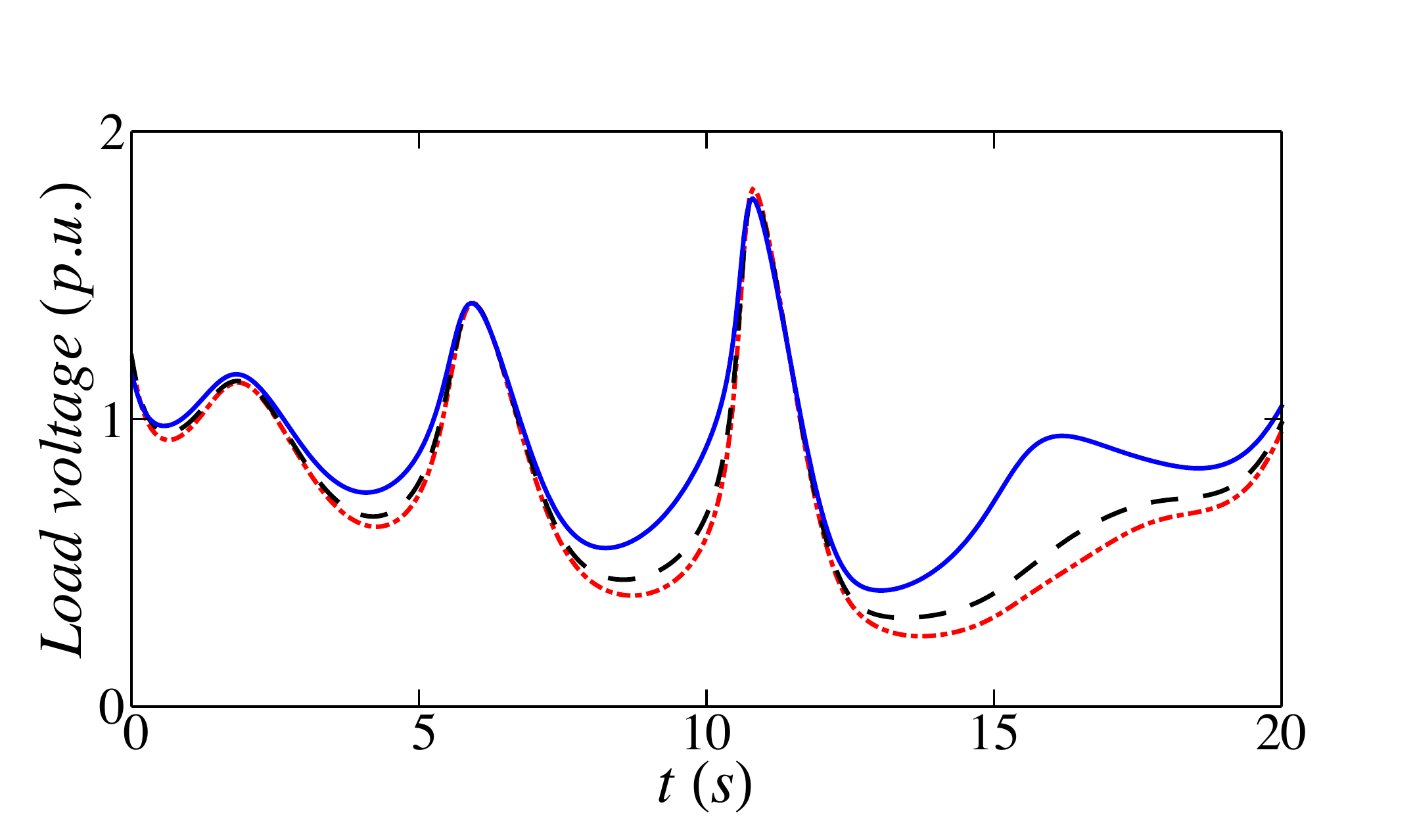}}}%
    \subfigure[Trip line $2-7$, Stable]{{\includegraphics[width=4.5cm]{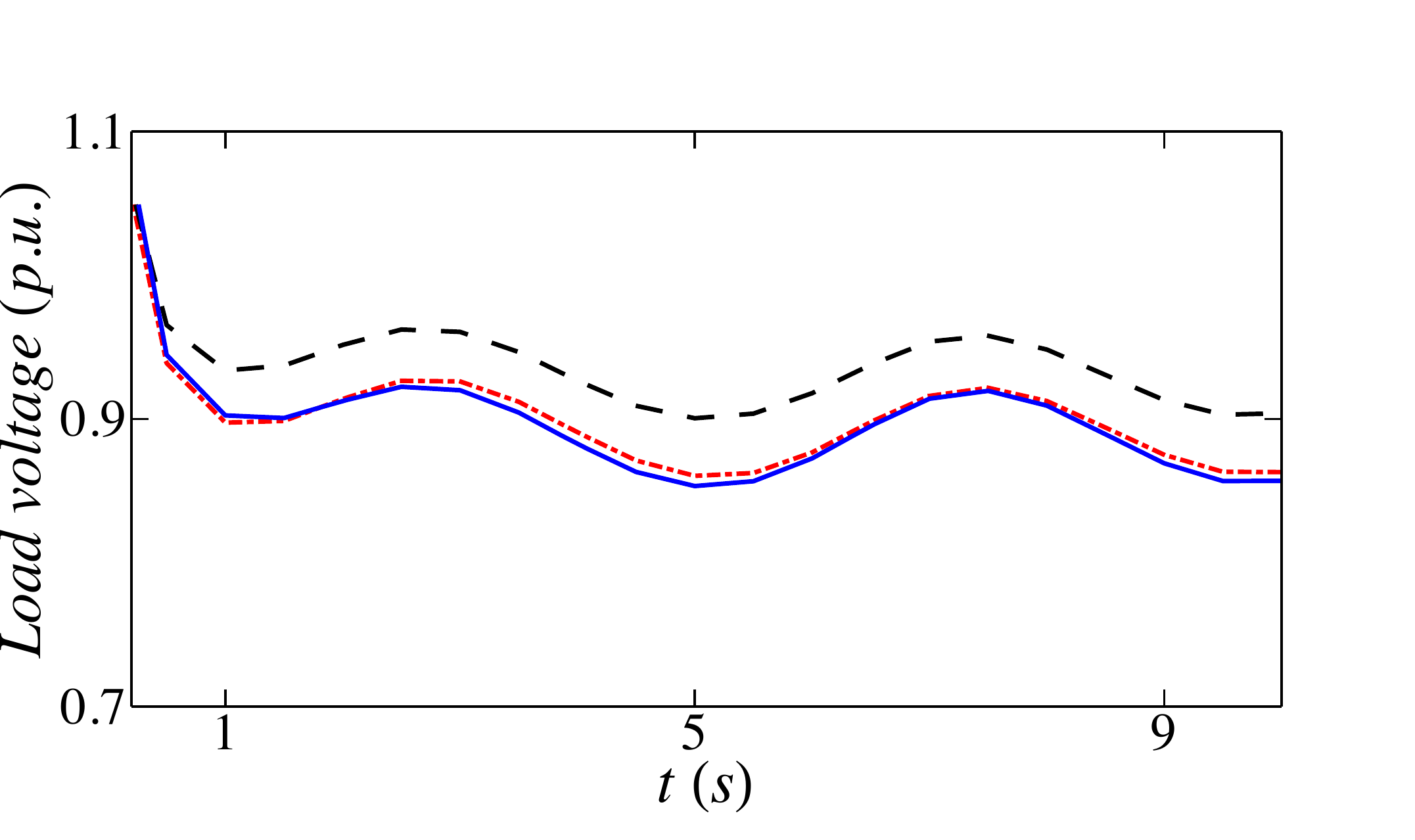} }}%
    \quad
    \subfigure[Trip line $7-8$, Stable]{{\includegraphics[width=4.5cm]{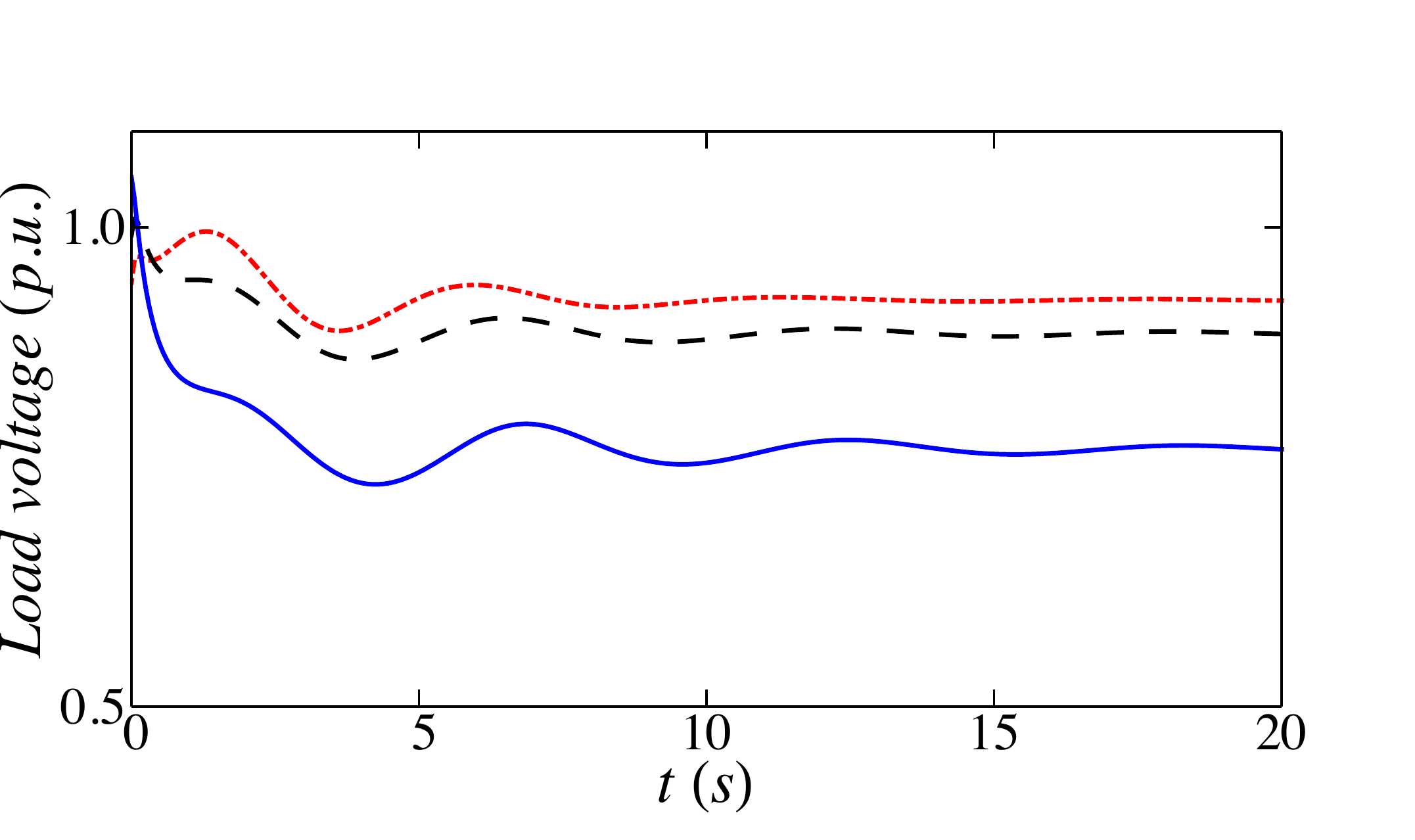}}}%
    \subfigure[Trip line $9-3$, Stable]{{\includegraphics[width=4.5cm]{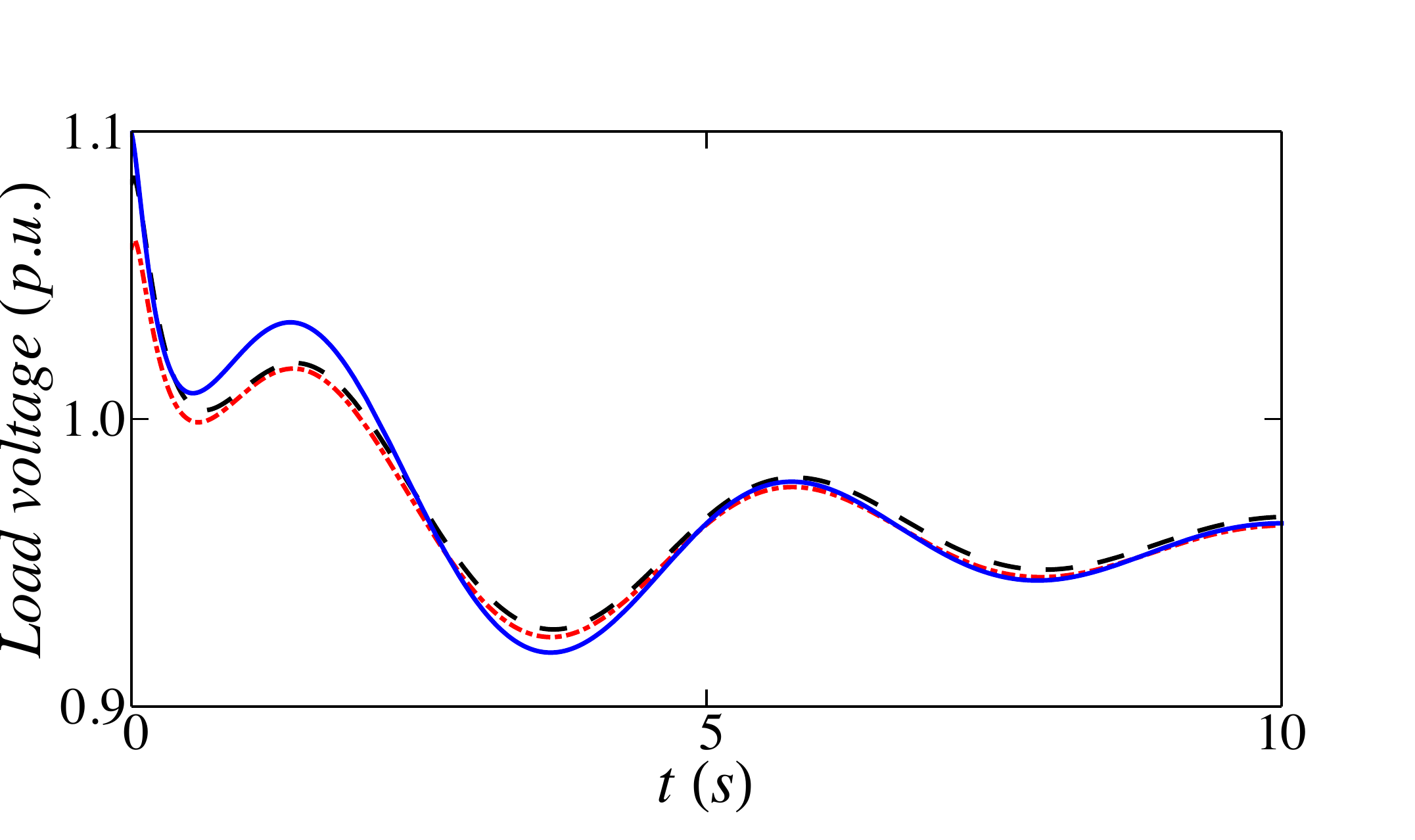} }}%
    \caption{The load voltage evolutions in time-domain simulations in contingency analysis for Case \rom{2}, $\tau=5\,s$}%
    \label{fig:case2}%
\end{figure}

\begin{figure}[htb]
    \centering
    \subfigure[Trip line $1-4$, Unstable]{{\includegraphics[width=4.5cm]{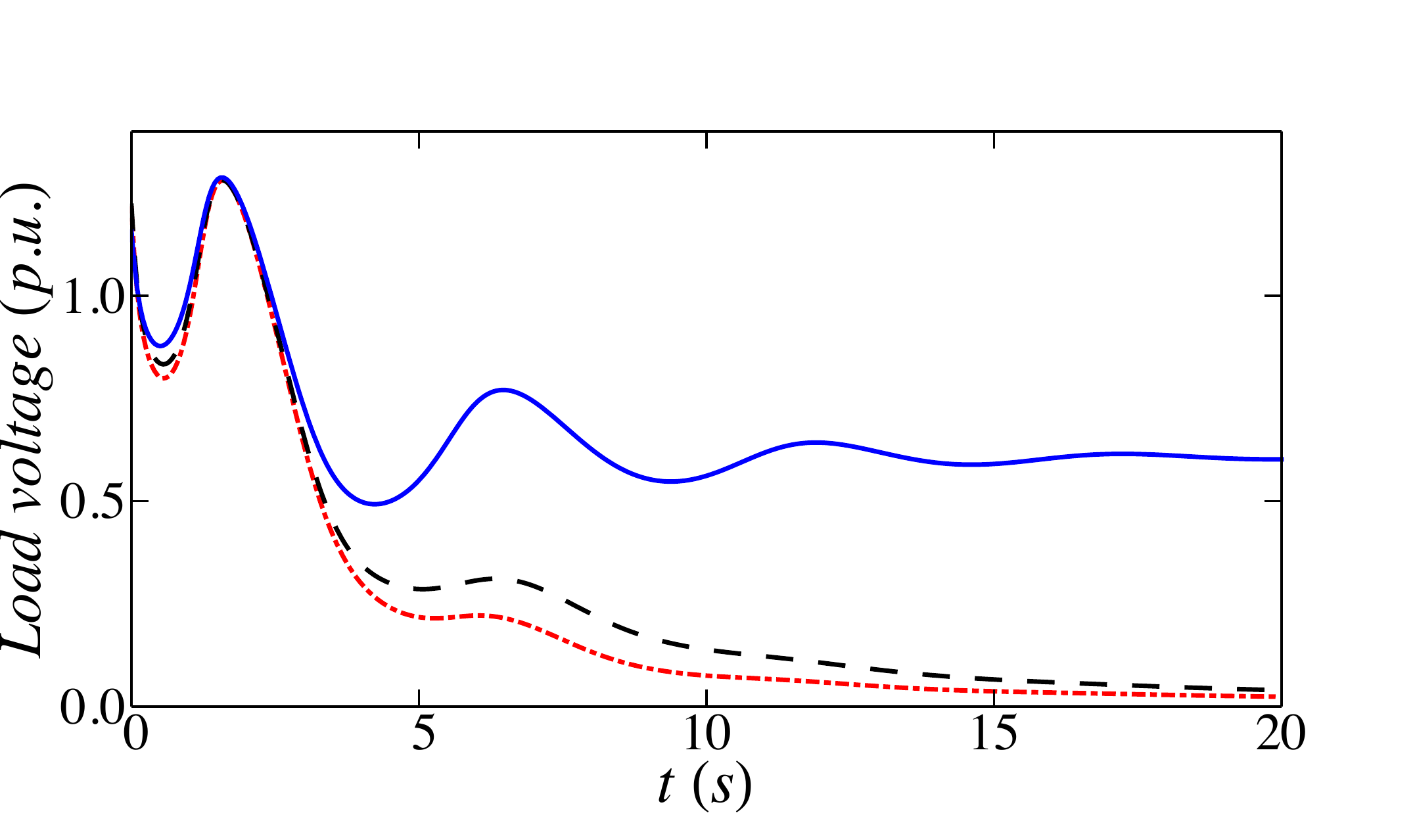}}}%
    \subfigure[Trip line $2-7$, Unstable]{{\includegraphics[width=4.5cm]{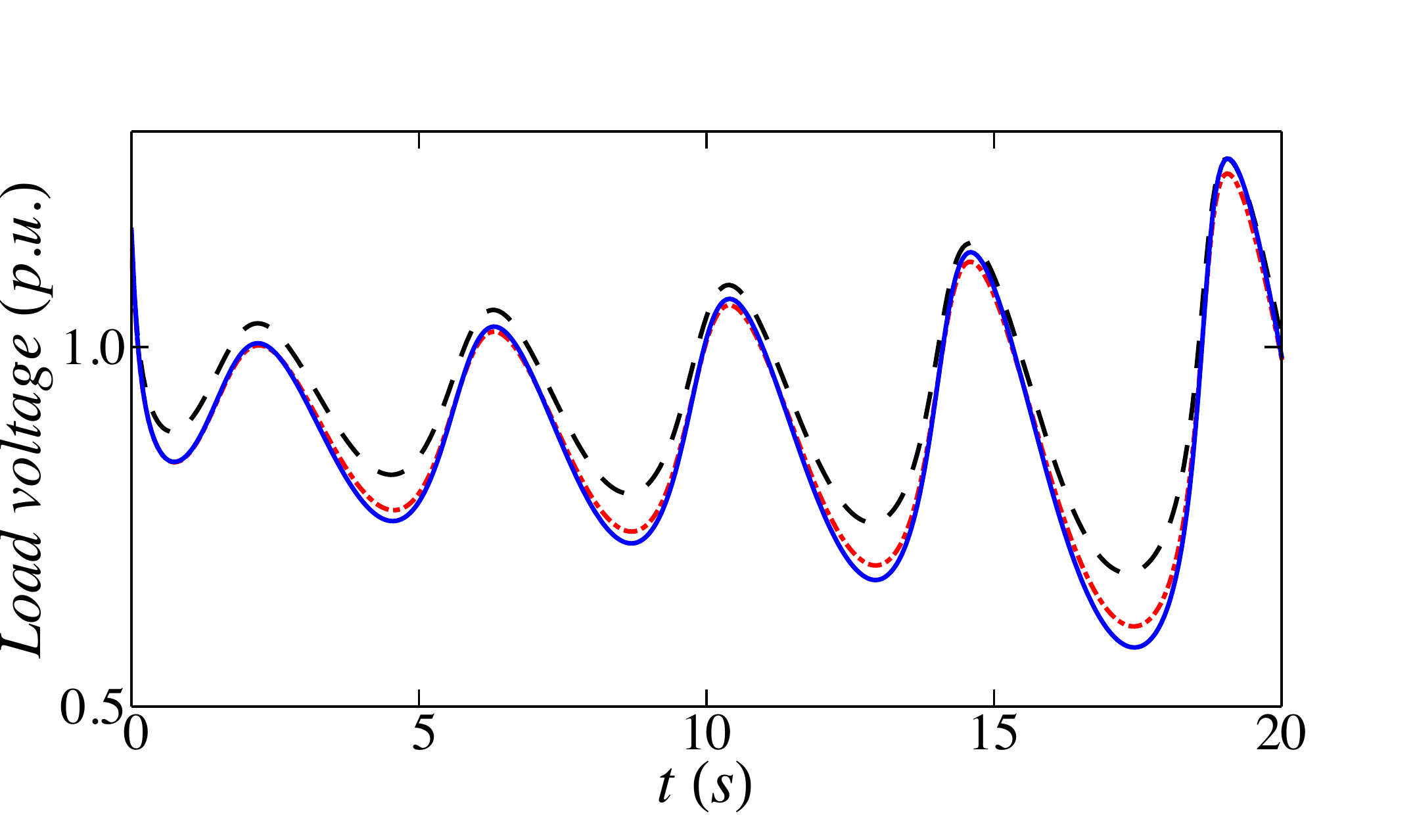} }}%
    \quad
    \subfigure[Trip line $7-8$, Limit Cycle]{{\includegraphics[width=4.5cm]{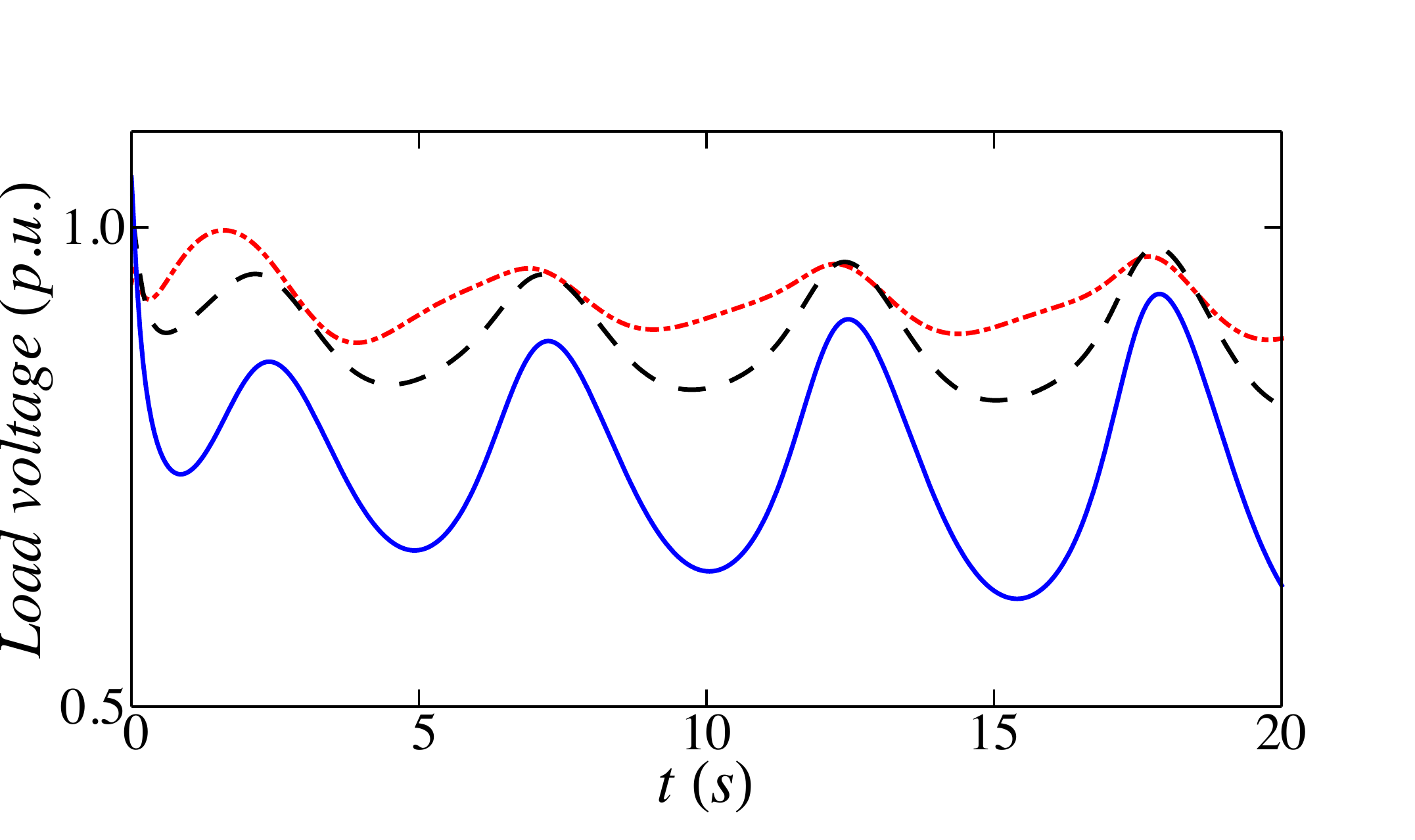}}}%
    \subfigure[Trip line $9-3$, Stable]{{\includegraphics[width=4.5cm]{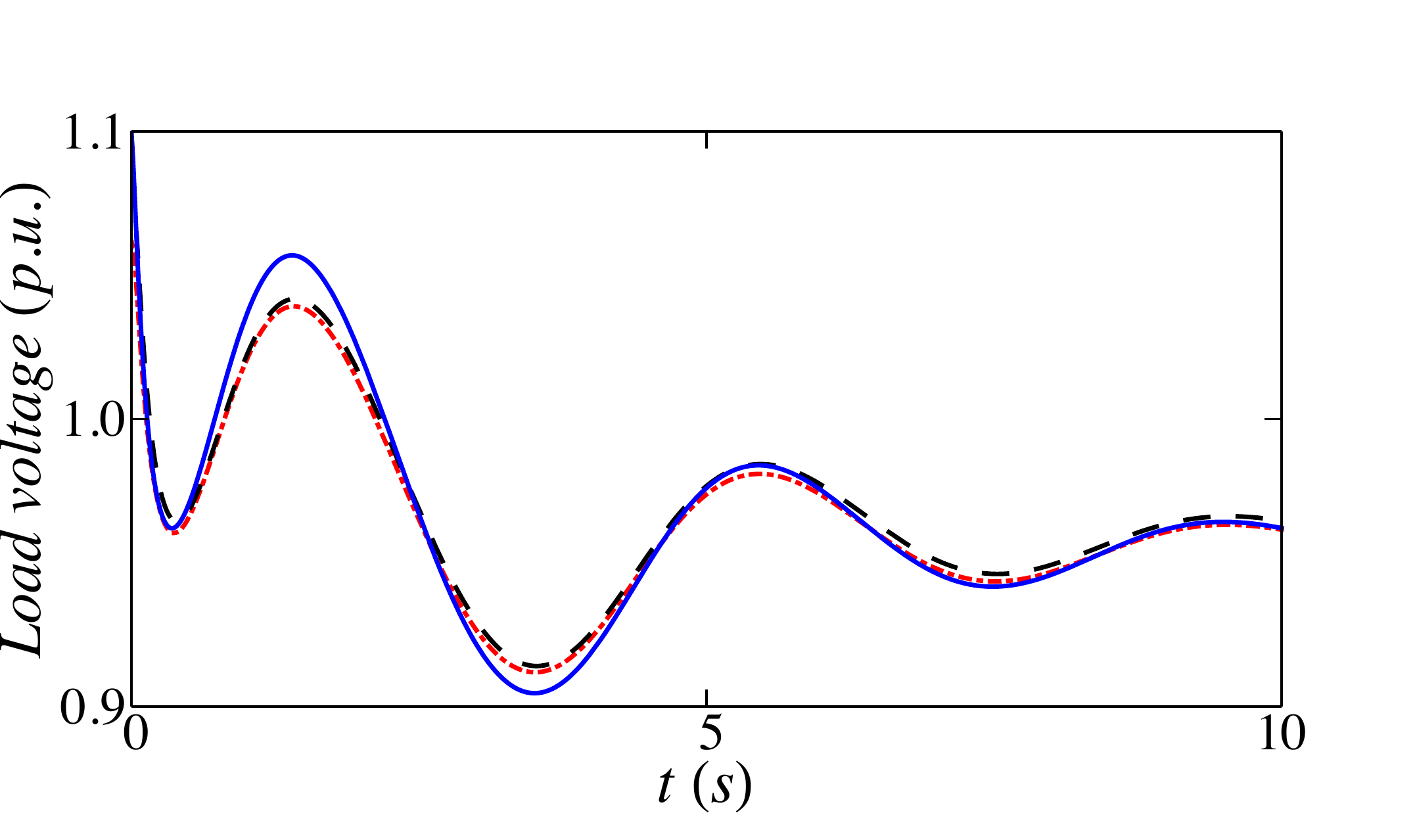} }}%
    \caption{The load voltage evolutions in time-domain simulations in contingency analysis for Case \rom{3}, $\tau=10\,s$}%
    \label{fig:case3}%
\end{figure}

In Table \ref{table:contingency}, for RSA results, RS and NRS imply robust stable and non robust stable, respectively. One can observe that if the system is robust stable, for example when line $9-3$ is tripped, the non-uncertain stability assessment also indicates that the system is stable in all cases. In contrast, if the system is not robust stable according to RSA results, there exists some cases or some set of instant relaxation times cause the system unstable. This happens when either line $1-4$ or $2-7$ is disconnected. Moreover, in two considered cases, the system is stable if the line $7-8$ is tripped. For this contingency, RSA result indicates that the system is non-robust stable. In fact, the system is unstable with $\tau_5=\tau_6=1 \,s$ and $\tau_8>14\,s$ where the load voltage at bus $8$ collapses around $t=60\,s$. 

In considered situations, limit cycles (LC) appear in Case \rom{2} with line $1-4$ tripping and in Case \rom{3} with line $7-8$ tripping. The system will exhibit voltage oscillations which are unexpected and dangerous because they may limit the power transfers and induce stress in the mechanical shafts \cite{PaiDynamics}. In such cases, RSA also indicates that the system is non-robust stable or potentially unstable.

The contingency analysis results, for example in Case \rom{2} and Case \rom{3}, can also be represented with time-domain simulations as in Figure \ref{fig:case2} and Figure \ref{fig:case3} where the red dash-dot, black dash, and blue solid trajectories correspond to the load voltages at bus $5$, $6$, and $8$, respectively. For $\tau=5\,s$ and  tripping the line $1-4$, the system encounters Hopf bifurcation and the voltages keep oscillating but never go beyond the range from $0.2\,p.u.$ to $1.8\,p.u.$. Also, for $\tau=5\,s$ and  tripping the line $2-7$, the system is stable but very lightly damped. The voltages settle around $t=800\,s$ which indicates that the system is close to Hopf bifurcation point. The first $20$-second and $10$-second evolutions of the load bus voltages when tripping the line $1-4$ and $2-7$ for Case \rom{2} are presented in Figure \ref{fig:case2}(a) and Figure \ref{fig:case2}(b), respectively. Moreover, for Case \rom{3}, the line $2-7$ is tripped, the voltage at the load bus $8$ collapses around $t=80\,s$; hence the system is unstable. Figure \ref{fig:case3}(b) shows the first $20$-second time evolution of the unstable voltage trajectory.

However, RSA does not require any time-domain simulation, thus reduces the need of storages and the time consuming. In addition, RSA does not provide the margin to SNB or particular bifurcation points, instead RSA provides another type of stability margin i.e. robust stability margin which measures the distance between the current operating point to the robust stability boundary. For example, for the contingency case in which the line $9-3$ is tripped, the security indicator discussed in section \ref{sec:SI}, $SI=\rho=0.004$, indicates that the system will work close to the robust stability boundary after the contingency. Hence, a slight change in parameters will cause the system move to the non-robust stable region where it may become unstable. In contrast, the contingency cases with the line $2-7$ tripping, even though the system is non-robust stable, the security SI is very small, i.e. $SI=\rho=-3.4\times10^{-5}$. If appropriate control is applied, the system will be secure in the robust stability region. In this sense, RSA with SI can help the system operators in designing emergency controls.

As aforementioned, it may be impossible to determine the actual values of the instant relaxation times of the loads. Without making any assumption about the load responses, RSA is recommended to run first to screen the most dangerous contingency set. If the RSA certifies that the system is robust stable, no further action is needed; otherwise, deeper analysis or other probabilistic-based assessments such as Monte Carlo simulations are required. Therefore, if RSA is used as the very first screening, the whole process of contingency analysis is expedited.

\subsection{IEEE $39$-bus New England system} \label{39bus}

\begin{figure}[!ht]
    \centering
    \includegraphics[width=.9 \columnwidth]{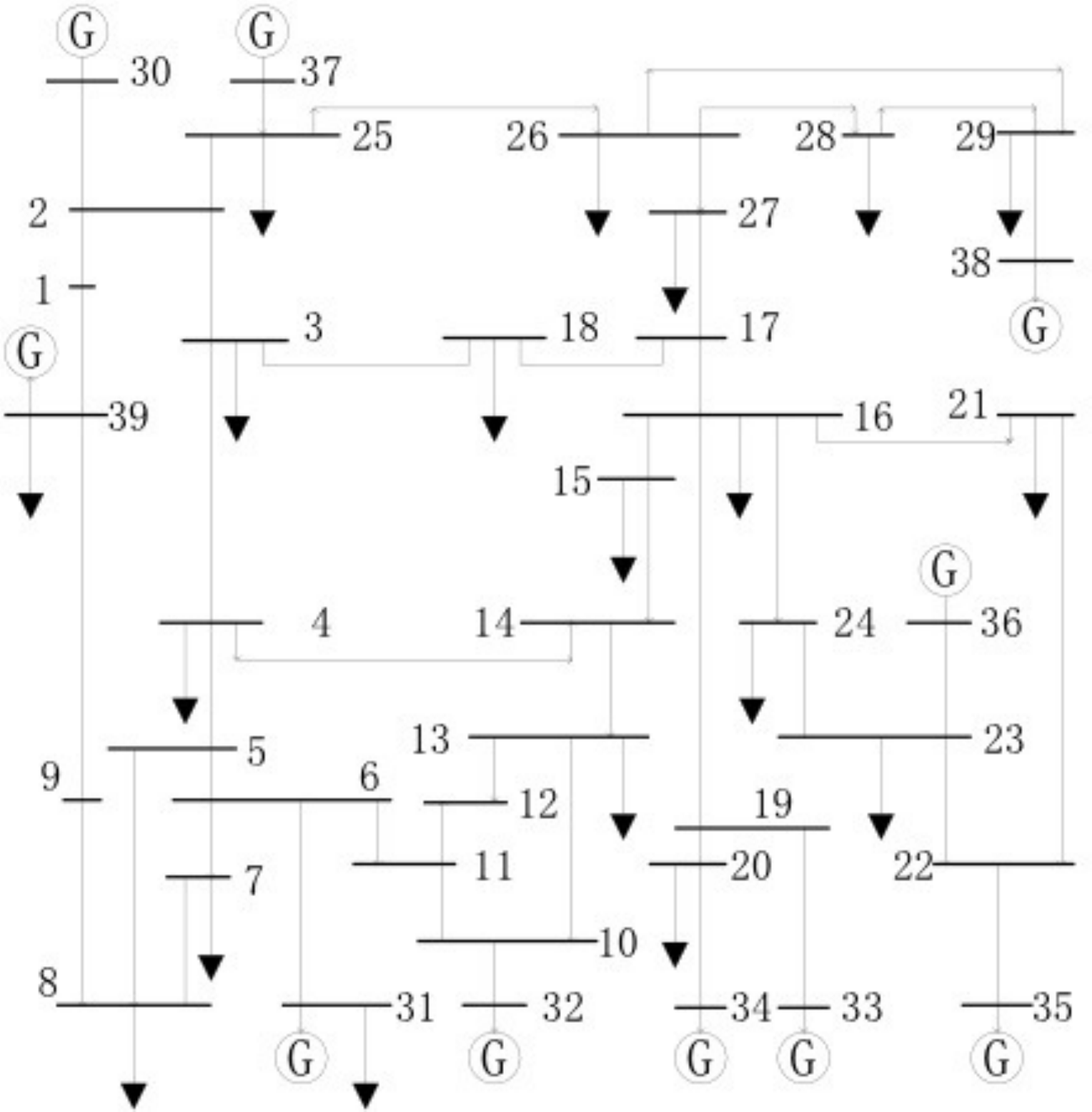}
	\caption{The New England system}
     \label{fig:NE}
\end{figure}

In this section, we  illustrate the concept of robust stability with the IEEE $39$-bus New England system. The configuration of the system is shown in Figure \ref{fig:NE}. All generators are identical and have the same set of parameters as the following: $T'_{d0}=10\,s$; $x_d=1.0\,p.u.$; $x'_d=0.2\,p.u.$; $T=0.39\,s$; $K=10$. Other system parameters are adopted from \cite{39busNE}. In the considered scenario, all the loads have the same power factor, i.e. $\cos(\phi)=0.9$ lagging; the load bus $29$ is chosen as the reference load and other load levels are increased with the correlated loading factor $k_c$, i.e. $P_i=k_cP_{29}$, where $i \in \mathcal{L},\,i \neq 29$. We will consider the situation with identical load power consumptions or $k_c=1$. The load increments were picked up by evenly distribution among all generators.

\begin{figure}[!ht]
    \centering
    \includegraphics[width=1 \columnwidth]{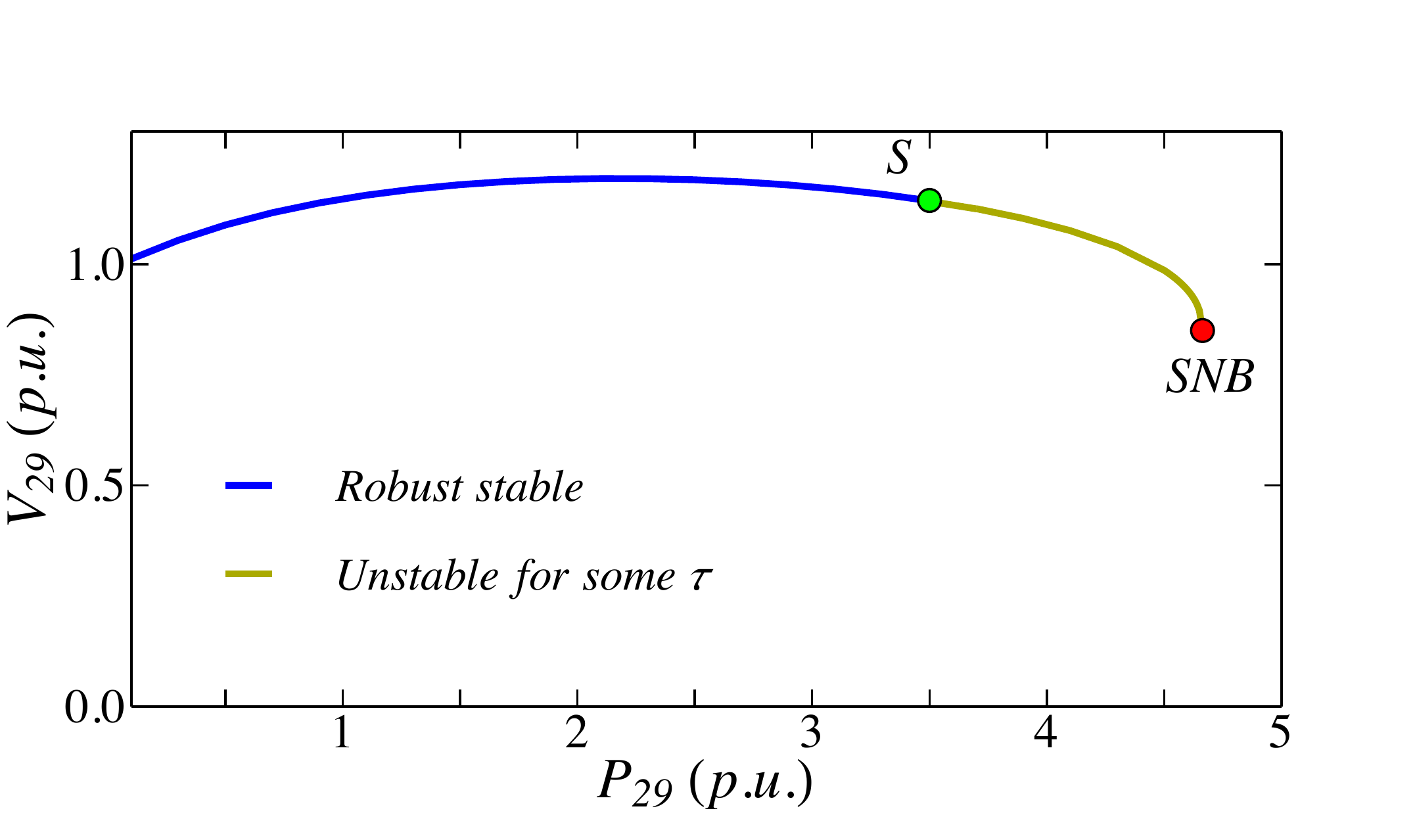}
	\caption{Robust stability illustration for the New England system, correlated loading condition $k_c=1$}
     \label{fig:NEPV_loading}
\end{figure}

For the given scenario, the robust stability of the New England system is illustrated in Figure \ref{fig:NEPV_loading} which is similar to that of the rudimentary system and WSCC $3$-machine, $9$-bus system. The system is robust stable up to point S where $P_{29}=3.5\,p.u.$. SNB occurs near $P_{29}=4.67\,p.u.$. Therefore, the margin from S to SNB is around $25.05\%$.

We also considered another loading scenario where the base loading levels are adopted from \cite{39busNE}. Then for each load the power factor is kept unchanged while all the load consumptions are scaled with the same scalar factor $k_c>0$. In this scenario, SNB happens at $k_c=3.0$ and the system is robust stable up to $k_c=1.2$. This means that the system can become unstable at some loading level that is above $20\%$ of the normal operating condition. Moreover, the margin from S to SNB is $60\%$.

\section{Investigation of the non-certified robust stability region} \label{sec:nonrobust}

In Figure \ref{fig:rudiPV}, \ref{fig:WSCCPV}, \ref{fig:WSCCPV_loading}, and \ref{fig:NEPV_loading}, the non-certified robust stability regions are in yellow and lie between the robust stable point S and the saddle-node bifurcation point SNB. Different from the robust stability region, the non-robust stability one is mostly affected by the load dynamic uncertainty. The system dynamics and behavior may be very different and complicated because of more pronounced nonlinearity. When the system is stressed or is subject to disturbances, the system is likely to operate in those regions. Therefore, it is important to explore the non-robust stability regions which may help the system operators to have better understanding of the system. We will address two important questions in this section, i.e. which parameter determines the robust stable point S and how the system behaves in the non-certified robust stability region.

\subsection{Robust stable point S}
The position of point S as well as the robust stability region characterizes the level of ``robustness'' of the system. For the same configuration, the size of robust stability region might vary from case to case, from scenario to scenario.

\subsubsection{Effect of loading levels}
We reconsider the scenario with correlated loading condition, i.e. $P_5=P_6=k_c\,P_8$ and $Q_5=Q_6=k_c\,Q_8$ where $k_c$ is the correlation factor. Table \ref{table:kc} illustrates how the system loading levels affect the robust stability region. The margin in $\%$ measures the distance between point S and SNB compares to the maximum loading level corresponding to SNB.

\begin{table}[ht]
    \caption{Effect of loading levels on S}
    \centering
    \label{table:kc}
\begin{tabular}{|l||*{4}{c|}}\hline
\textbf{$k_c$}
&\makebox[3em]{$0.5$}&\makebox[3em]{$1$}&\makebox[3em]{$2$}&\makebox[3em]{$4$}
\\\hline\hline
\textbf{S} ($p.u.$) &2.70&1.86&1.07&0.55\\\hline
\textbf{SNB} ($p.u.$) &3.10&2.16 &1.22&0.65\\\hline
\textbf{Margin} ($\%$) &12.90&13.89 &12.30&15.38\\\hline
\end{tabular}
\end{table}

From Table \ref{table:kc}, one can see that an increase in the correlation loading factor resulted in an decrease in the maximum loading level where SNB happens. However, increasing $k_c$ may not necessarily lead to the change in the robust stable point S in such a way that extends the margin between S and SNB.

\subsubsection{Effect of load power factors}
Various power factors were considered in Table \ref{table:pf}. One can see that as the load power factors change from lagging to leading, the relative distance between the robust stable point S and SNB increases. This means that the more lagging the power factor is, the wider the robust stable region becomes. Therefore, injecting more reactive powers into the network may shorten the robust stability region relatively.

\begin{table}[ht]
    \caption{Effect of power factor on S}
    \centering
    \label{table:pf}
\begin{tabular}{|l||*{5}{c|}}\hline
\textbf{power factor}
&\makebox[3em]{$0.5\, lag$}&\makebox[3em]{$0.9\, lag$}&\makebox[3em]{$1.0$}&\makebox[3em]{$0.9\, lead$}&\makebox[3em]{$0.5\, lead$}
\\\hline\hline
\textbf{S} ($p.u.$) &0.95&1.86&2.30&2.40&2.20\\\hline
\textbf{SNB} ($p.u.$) &1.00&2.16&2.74&3.35&4.80\\\hline
\textbf{Margin} ($\%$) &5.00&13.89 &16.06&28.36&54.17\\\hline
\end{tabular} 
\end{table}

\subsubsection{Effect of exciter gain $K$}
The model of exciter is described in \eqref{eq:Efd}. In this section, effect of exciter gain $K$ is analyzed in Table \ref{table:K}. As observed in \cite{PaiDynamics}, the sufficient increase of the exciter gain may lead to instability even for normal loading level. With robust stability analysis, we now can determine at which loading level the exciter gain cannot affect the system stability by considering $K$ as an uncertain parameter.

\begin{table}[ht]
    \caption{Effect of exciter gain $K$ on S}
    \centering
    \label{table:K}
\begin{tabular}{|l||*{6}{c|}}\hline
\textbf{$K$}&\makebox[2.5em]{$5$}
&\makebox[2.5em]{$10$}&\makebox[2.5em]{$20$}&\makebox[2.5em]{$30$}&\makebox[2.5em]{$40$}&\makebox[2.5em]{$50$}
\\\hline\hline
\textbf{S} ($p.u.$) &1.60&1.80&1.86&1.87&1.96&1.97\\\hline
\textbf{SNB} ($p.u.$) &2.16&2.16&2.16&2.16&2.16&2.16\\\hline
\textbf{Margin} ($\%$) &25.93&16.67&13.89 &13.43&9.26&8.79\\\hline
\end{tabular}
\end{table}
As expected, the changing in $K$ does not affect the maximum loading level at SNB point. However, surprisingly, an increase in $K$ tends to extend the robust stable region as pushing point S closer to SNB point. When $K$ goes to infinity, point S does not change much and the system is robust stable up to circa $P_8=2.00\,p.u.$. This indicates that exciter gain may affect the system stability in a rather complicated manner which depends on the interactions between exciters and generators with other dynamic devices/components; as well as depends on the considered conditions/scenarios.

\subsection{The system behavior in the region between S and SNB}

Since dynamic voltage stability is normally studied by monitoring the eigenvalues of the linearized system \cite{PaiDynamics}, we investigate how these factors alter the system eigenvalues in the s-plane. The rudimentary system results are demonstrated as below.

\subsubsection{Effect of loading levels}

\begin{figure}[!ht]
    \centering
    \includegraphics[width=1 \columnwidth]{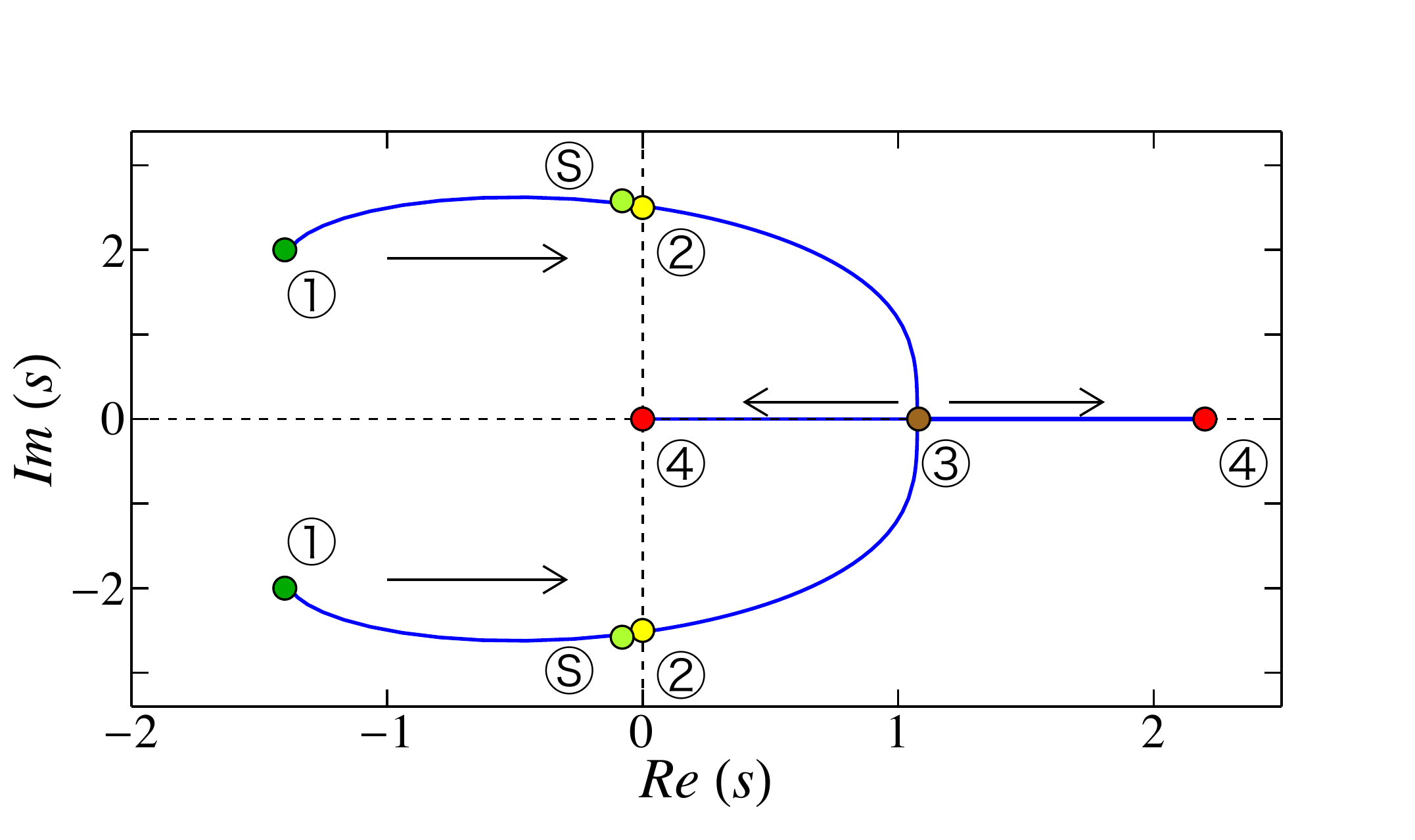}
	\caption{Critical eigenvalue trajectory under the load changes in the rudimentary system, $\tau=7.35\,s$}
     \label{fig:crieigentraj_2bus_tau735}
\end{figure}

For $\tau=7.35\,s$, the trajectory of the critical eigenvalue pair, \circled{1}-\circled{S}-\circled{2}-\circled{3}-\circled{4}, is plotted in Figure \ref{fig:crieigentraj_2bus_tau735} as the load power increases from zero to the maximum loading level. Note, that the enclosed alphanumerics indicate that the corresponding eigenvalues belong to the same system matrix which is related to the same power level consumption $P_0$. In Figure \ref{fig:crieigentraj_2bus_tau735}, the pair of critical eigenvalues starts at \circled{1} with zero power level consumption and move to the right half plane in the s-plane. When the trajectory crosses the imaginary axis at \circled{2} where $P_0=2.6\,p.u.$, the system encounters Hopf bifurcation. This is also illustrated at point H in Figure \ref{fig:rudiPV}. The eigenvalues associated with the power level at robust stable point S in RSA are marked with \circled{S} which is close to \circled{2}. As the load power continues increasing, the two critical complex eigenvalues coalesce at \circled{3} on the real axis of the s-plane and become a pair of real eigenvalues. Then the pair of critical real eigenvalues diverge following the two arrows towards \circled{4}. As soon as the one that moves to the left reaches \circled{4} at the origin, the SNB occurs. Since the load power cannot exceed the maximum loading level, the trajectory ends here at \circled{4}. The similar trajectory is also described in \cite{PaiDynamics}.

\begin{figure}[ht]
    \centering
    \includegraphics[width=1 \columnwidth]{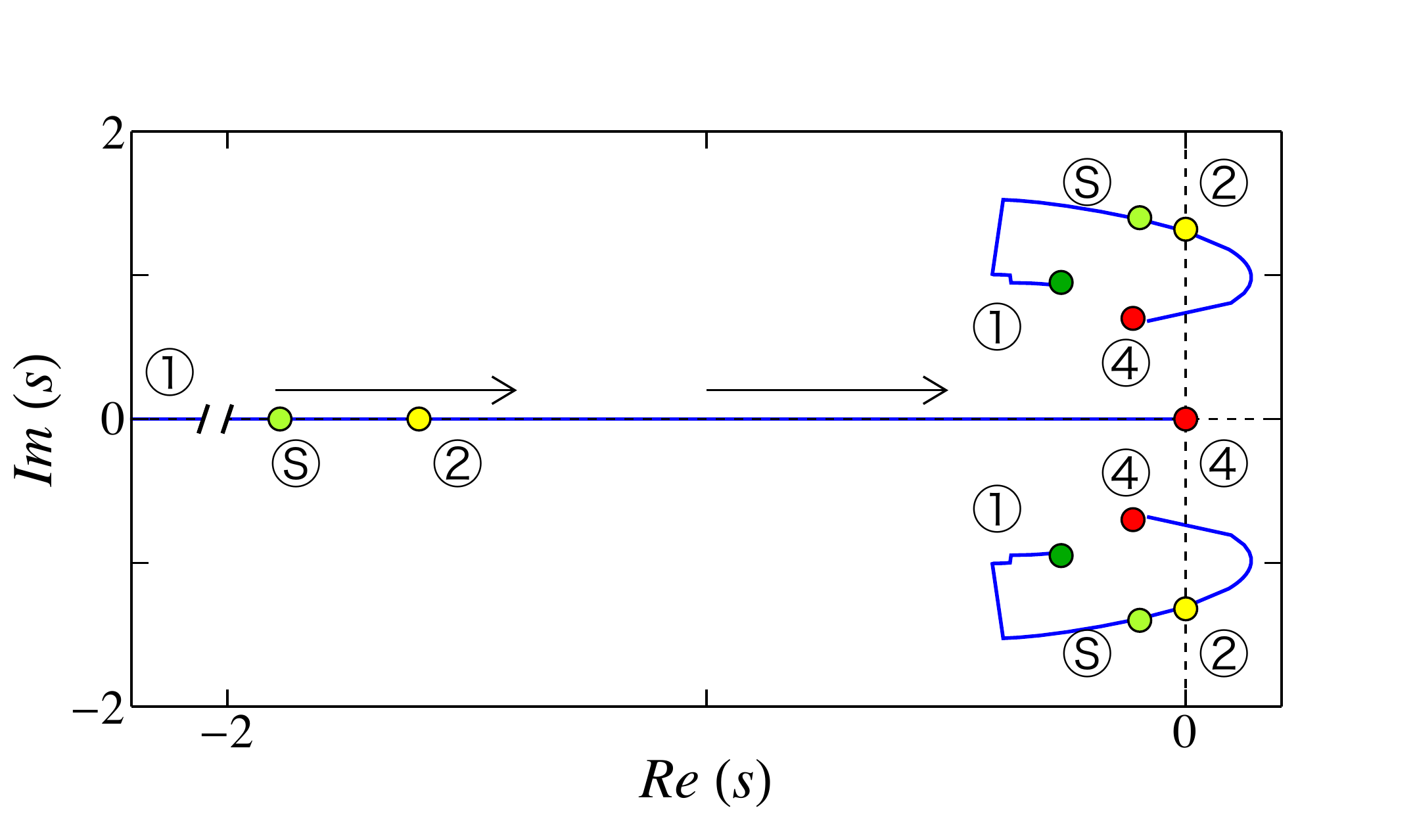}
	\caption{Critical eigenvalue trajectory under the load changes in the WSCC $3$-machine, $9$-bus system}
     \label{fig:crieigentraj_9bus}
\end{figure}

\begin{figure}[htb]
    \centering
    \subfigure[{\ding[1.3]{192}} Stable] {{\includegraphics[width=4.5cm]{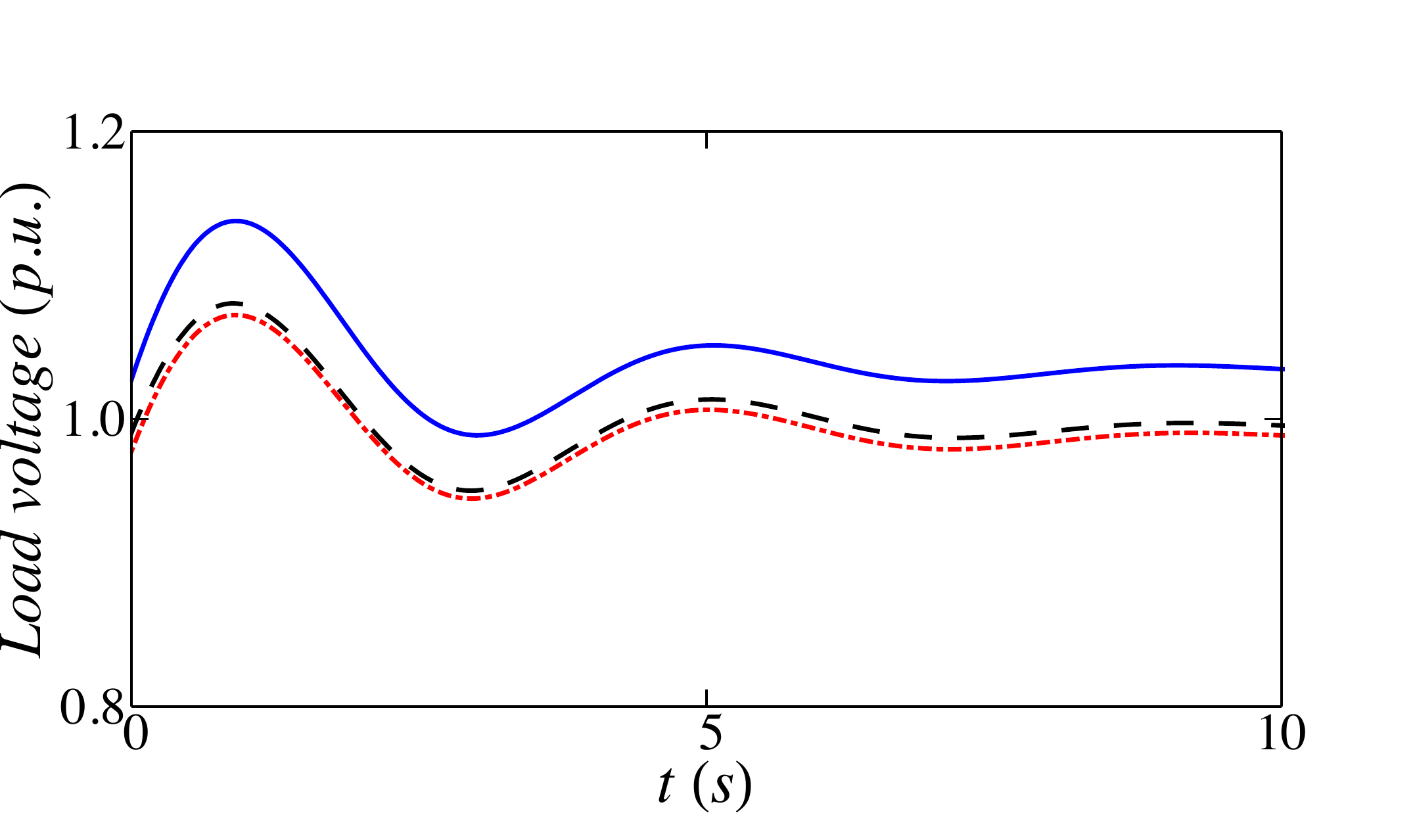}}}%
    \subfigure[$\circledS$ Stable]{{\includegraphics[width=4.5cm]{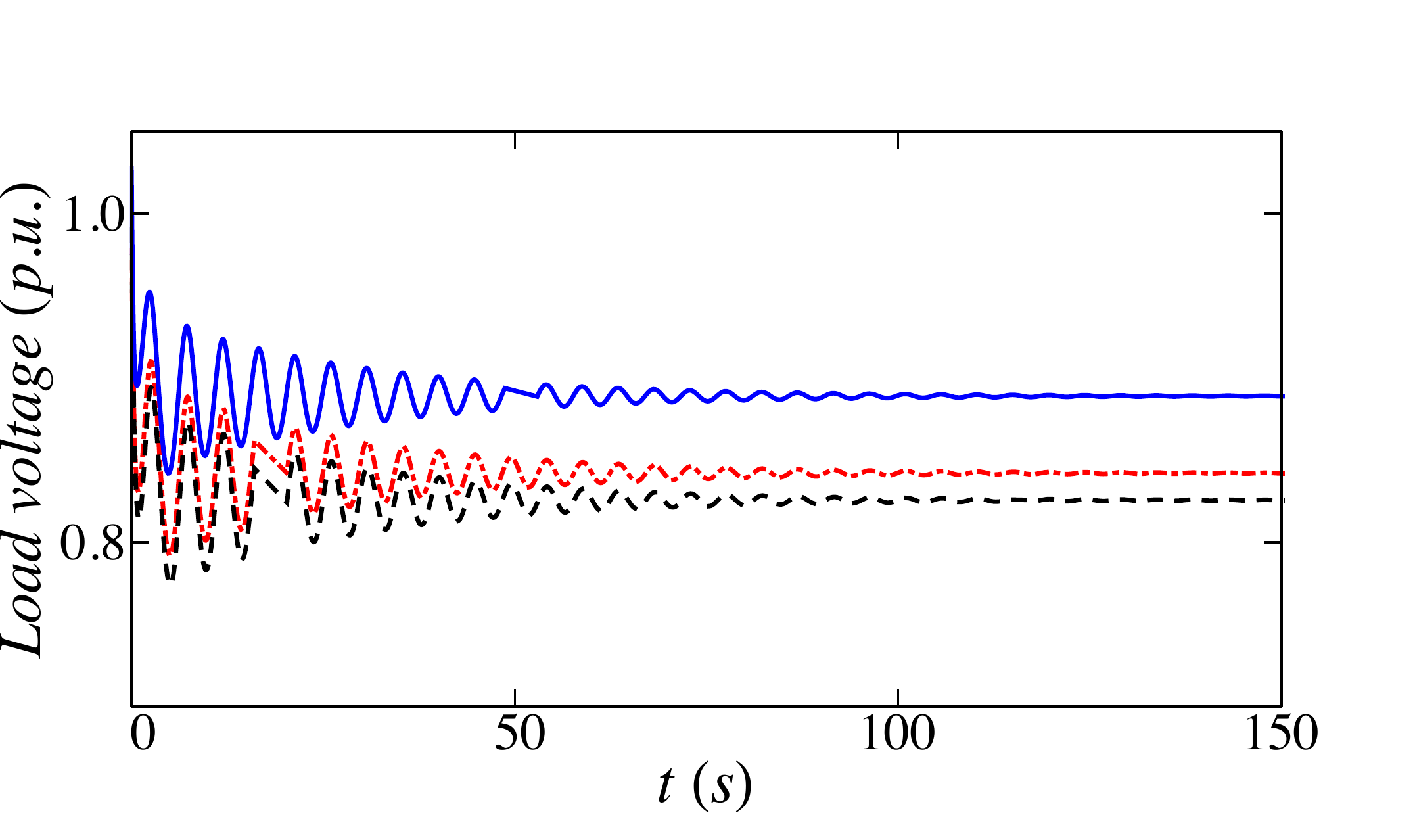} }}%
    \quad
    \subfigure[ {\ding[1.3]{173}} Limit Cycle]{{\includegraphics[width=4.5cm]{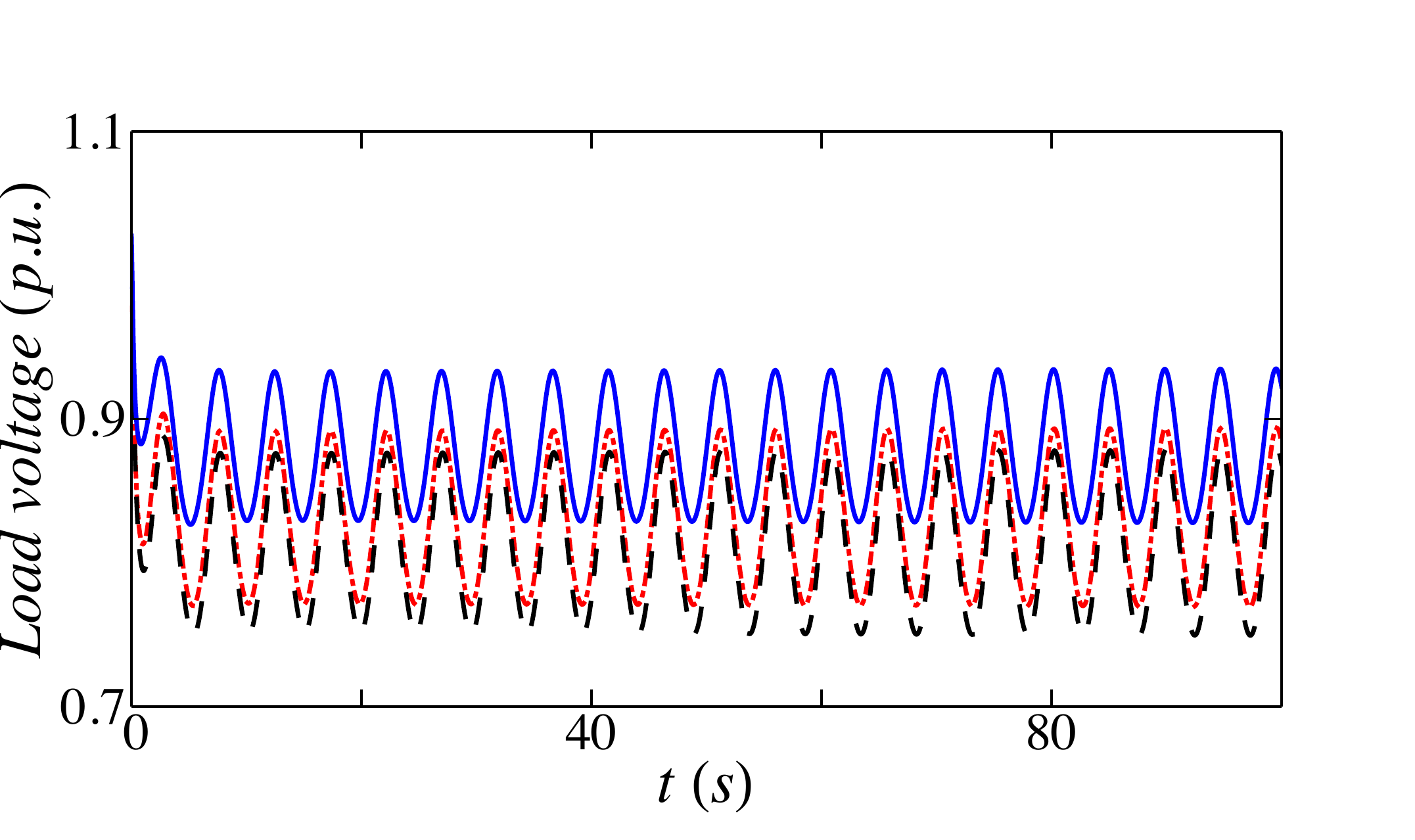}}}%
    \subfigure[ {\ding[1.3]{175}} Unstable (SNB)]{{\includegraphics[width=4.5cm]{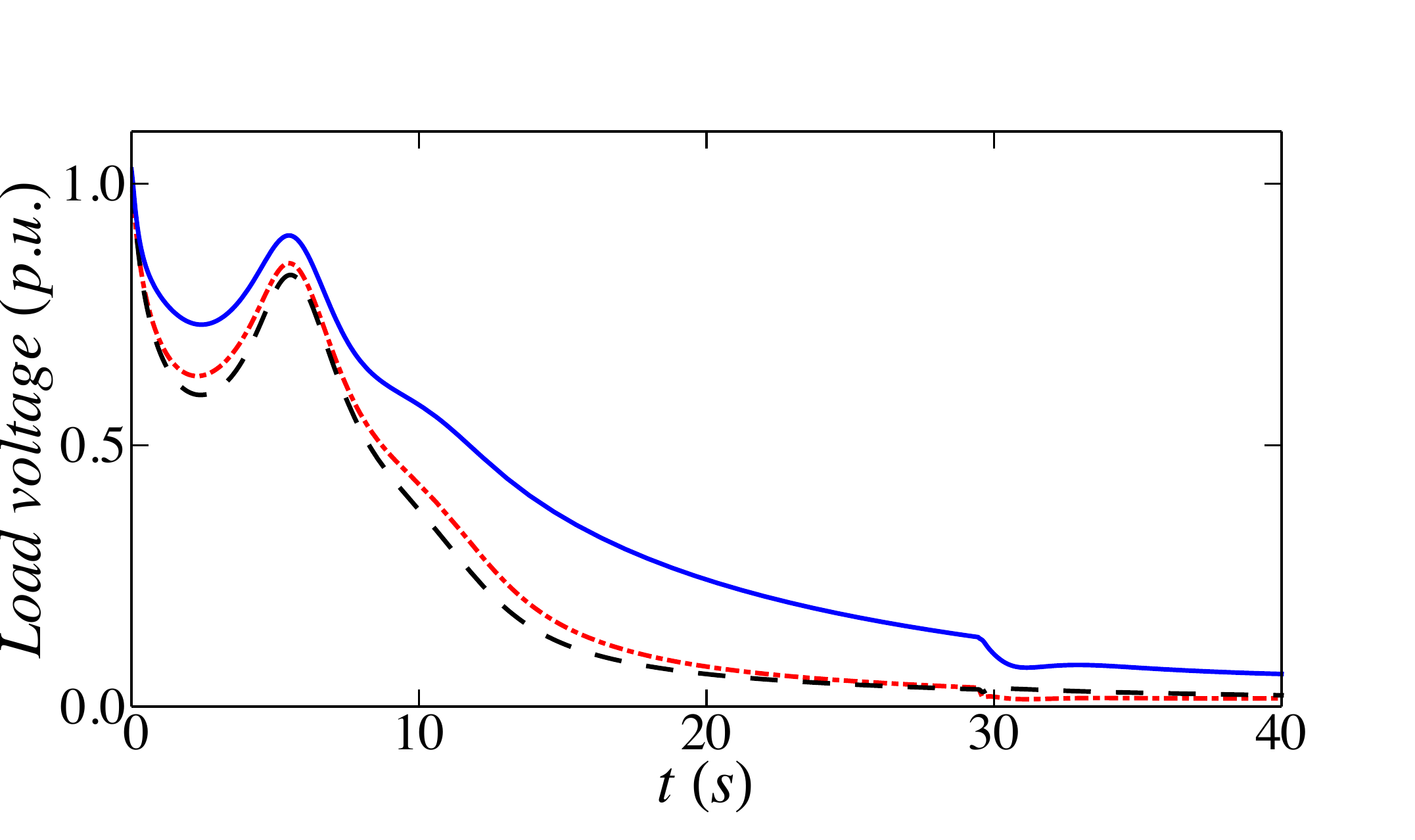} }}%
    \caption{The load voltage evolutions in time-domain simulations at different loading levels from {\ding[1.3]{172}} to {\ding[1.3]{175}} of the WSCC $3$-machine, $9$-bus system}%
    \label{fig:12s4}%
\end{figure}

\begin{figure}[htb]
    \centering
    \subfigure[$P_8=2\,p.u.$, Unstable]{{\includegraphics[width=4.5cm]{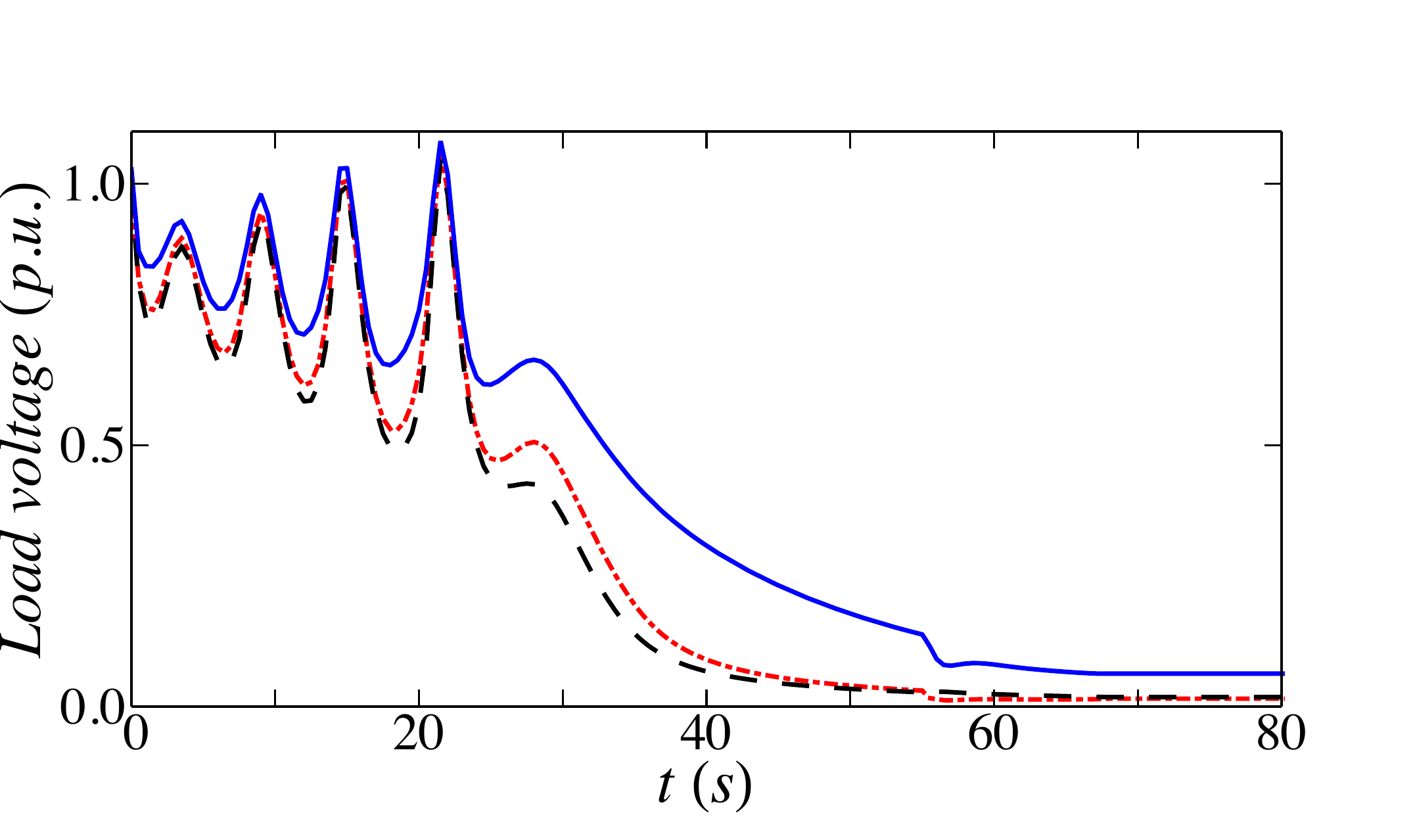}}}%
    \subfigure[Second Hopf bifurcation, Unstable]{{\includegraphics[width=4.5cm]{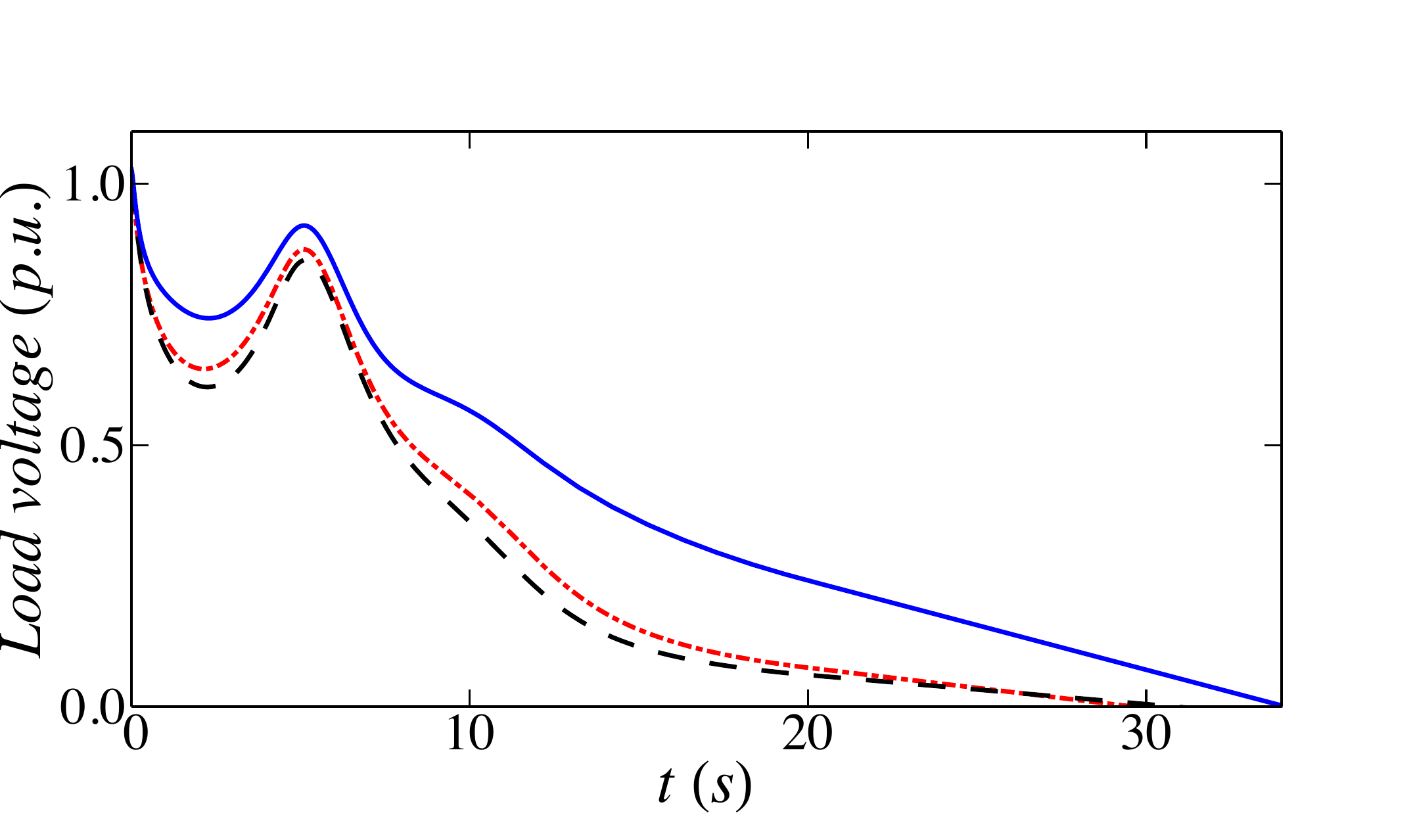} }}%
    \caption{The load voltage evolutions in time-domain simulations at $P_8=2\,p.u.$ and the second Hopf bifurcation of the WSCC $3$-machine, $9$-bus system}%
    \label{fig:24}%
\end{figure}

For the WSCC $3$-machine, $9$-bus system and the considered scenario with $\tau_5=6.5\,s$, $\tau_6=5.9\,s$, $\tau_8=5.35\,s$; the critical eigenvalue trajectories, \circled{1}-\circled{S}-\circled{2}-\circled{4} are plotted in Figure \ref{fig:crieigentraj_9bus}. In this case, as the load level increases from zero to the maximum loading level, the critical eigenvalue trajectory starts at \circled{1} or the point at ($-5.7$, $0$) which is far to the left, then follows the arrows direction to the origin or \circled{4}. The critical complex eigenvalue pair also crosses the imaginary axis to the right half plane then returns to the left half plane without coalescency. Along the trajectory the system encounters Hopf bifurcation twice. At \circled{4} where the critical real eigenvalue reaches the origin, SNB happens. Interestingly, there is a small region between the second Hopf bifurcation and SNB, the system is stable. However, in that region, low damping causes the system oscillates under the effect of a disturbance. The corresponding time-domain simulation also indicates that the initial condition need to close to the equilibirum state values to ensure that the system will converge to that equilibirum. This implies that the equilibrium has a small stability region. The trajectories in Figure \ref{fig:crieigentraj_2bus_tau735} and Figure \ref{fig:crieigentraj_9bus} are the two typical transients from Hopf bifurcation to SNB that can be observed when scaling the loading level. They may be different in the region between \circled{S} and \circled{4}, but in the end, one single real eigenvalue reaches the origin at \circled{4}.

The time-domain trajectories of the load voltages for corresponding power levels along the trajectory \circled{1}-\circled{S}-\circled{2}-\circled{4} are shown in Figure \ref{fig:12s4} where we use the same color code for the load voltages as in section \ref{sec:VSAexample}. Figure \ref{fig:24}(a) shows the load voltage levels at the load level between \circled{2} and the second Hopf bifurcation point, i.e. $P_8=2\,p.u.$. For $P_8=2.14\,p.u.$, the system encounters Hopf bifurcation again and the corresponding voltage trajectories at the loads are recorded in Figure \ref{fig:24}(b). In this case, the system loses stability via Hopf bifurcation. 

In the considered scenario, as the loading level increases beyond \circled{2} where Hopf bifurcation occurs, the stable limit cycle shrinks and disappears at $P_8=1.98\,p.u.$. Then if the loading level continues increasing, the system may collapse as shown in Figure \ref{fig:24}(a) or may converge to another stable equilibrium point if there is one. This is so because the eigenvalue analysis characterizes the stability of the linearized system corresponding to the considered equilibrium, but multiple stable equilibria can coexist at the same time. However, the latter case in which another stable equilibrium coexists is rather rare in the real power systems so the collapse scenario is more likely to happen. In general, in the non-certificated robust stability region between S and SNB, the system may exhibit different types of bifurcation such as Hopf bifurcation, transcritial bifurcation, and SNB \cite{ajjarapu1992bifurcation, canizares1994transcritical}.

\subsubsection{Effect of load power factors}

Qualitatively, the load power factor does not change the trajectory of the critical eigenvalues of the system within S-SNB. It mostly pushes the point on the real axis where the critical complex eigenvalues pair merge to the right and widens the distance between the two points on the imaginary axis at \circled{2}. The effect on \circled{3} is recorded in Table \ref{table:pfon3} for $\tau=7.35\,s$.

\begin{table}[ht]
    \caption{Effect of power factor on the critical eigenvalues}
    \centering
    \label{table:pfon3}
\begin{tabular}{|l||*{5}{c|}}\hline
\textbf{power factor}
&\makebox[3em]{$0.89\, lag$}&\makebox[3em]{$0.98\, lag$}&\makebox[3em]{$1.0$}&\makebox[3em]{$0.98\, lead$}&\makebox[3em]{$0.89\, lead$}
\\\hline\hline
\textbf{Re(s) @ \circled{3}} &0.65&1.10&1.20&2.31&3.56\\\hline
\end{tabular}
\end{table}

\subsubsection{Effect of the time constants of the loads}
\begin{figure}[!ht]
    \centering
    \includegraphics[width=1 \columnwidth]{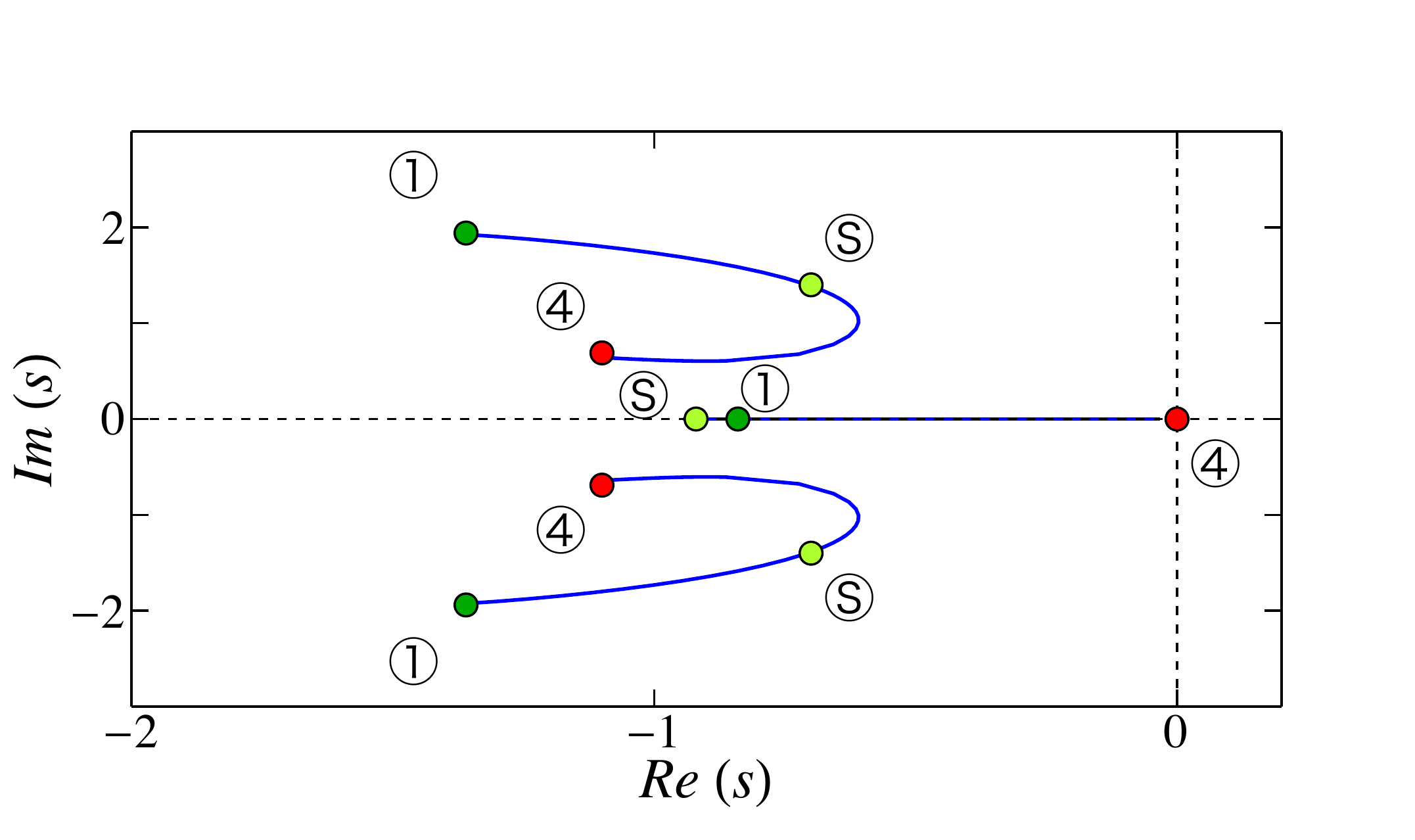}
	\caption{Critical eigenvalue trajectory under the load changes in the rudimentary system, $\tau=1\,s$}
     \label{fig:crieigentraj_2bus_tau1}
\end{figure}

For $\tau=1\,s$, the trajectory \circled{1}-\circled{S}-\circled{4} of critical eigenvalues of the system is plotted in Figure \ref{fig:crieigentraj_2bus_tau1}. In this case, Hopf bifurcation will not happen while increasing the loading level $P_0$, and all eigenvalues lie in the left half plane of the s-plane. At \circled{4}, the system encounters SNB or static voltage collapse. Moreover, the whole upper branch of the nose curve $PV$ is stable up to SNB.

When the instant relaxation time of the load increases to a large enough value, for example $\tau>7.35\,s$, the trajectory of the critical eigenvalues is similar to that in Figure \ref{fig:crieigentraj_2bus_tau735} except point \circled{3} on the real axis moves to the right. At the same time, \circled{S} also moves towards \circled{2} on the imaginary axis but it never reaches \circled{2}. This phenomenon can be explained as when the load time constant increases, the system may become unstable right after the robust stable point S. In this sense, if RSA cannot certify the system robust stability, the system is indeed non-robust stable. 

From our simulations we found that, if other parameters of the system are kept unchanged, the system is prone to be unstable if the instant relaxation times of the loads increase. This phenomenon can be understood as the larger time constants of the loads add more delay to the system which in turn reduces the phase margin \cite{franklin2010feedback}, finally causes the system to be unstable. In the s-plane, one can see that increasing the loads time constants pushes the critical eigenvalues to move close to the imaginary axis. When the critical eigenvalues cross to the right-half plane, the system is likely unstable.

\section{Conclusions and Future work}
In this work we have addressed the problem of uncertainty of load dynamics and its effect on the stability of the system and in particular on the occurrence of Hopf bifurcation. RSA developed in this work allows to certify the stability of the power system without making any assumptions on the dynamic response of the load. Whenever the system is certified to be robust stable, the system is guaranteed to be stable for any dynamic responses of the loads involved. The algorithm relies on convex optimization and can be applied even to large-scale system models. The regions that are certified to be robust stable are surprisingly large for models considered in the manuscript which suggest that Robust-Stability regime can be enforced in planning and operation without compromising efficiency and other economic factors. 

There are several ways of extending the algorithm that we plan to explore in future works. First, we plan to extend the types of uncertainties that can be handled to uncertainty in static characteristic, load levels, and allow for using the range bounds on the time constants. Second, we plan to develop algorithms that certify the robust stability of whole regions in parameter space, eliminating the need for repeating the procedure for every operating point candidate. Finally, we are interested in applying the algorithm to practical problems like stability constraint remedial action design, stability constraint planning and others. 

\section{Acknowledgement}
The work was partially supported by NSF, MIT/Skoltech and Masdar initiatives, Vietnam Educational Foundation, and the Ministry of Education and Science of Russian Federation, Grant Agreement no. 14.615.21.0001. We also thank Dr. Long Vu and Dr. Xiaozhe Wang for useful comments.

\bibliographystyle{IEEEtran}

\appendices
\section{The generic dynamic load model} \label{app:LM}
In this Appendix, we reproduce ULTCs and heating load models presented in \cite{Hill93} using the proposed generic dynamic load model. It is effective and convenient to represent the considered loads in the general form of (\ref{eq:abstract}). To illustrate this, we only present the models for active powers.

\subsection{ULTC dynamics}
We consider the ULTC depicted in Figure \ref{fig:ULTC}.
\begin{figure}[ht]
    \centering
    \includegraphics[width=0.6 \columnwidth]{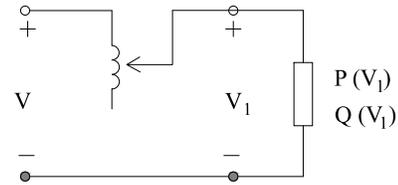}
	\caption{Tap-Changer and Static Load Combination \cite{Hill93}}
    \label{fig:ULTC}
\end{figure}

ULTC characteristics is adopted from \cite{Hill93} as follows:

\begin{equation} \label{eq:ULTCVapend}
\centering
V_1=K\,V
\end{equation}
\begin{equation}\label{eq:ULTCKapend}
T\dot{K}=-(V_1-V^0)
\end{equation}
\begin{equation}\label{eq:ULTCPd}
\centering
P=g\,V_1^2
\end{equation}
where $g=constant$; $V^0$ is the voltage set-point, for example $V^0=1\,p.u.$; $T$ represents the speed of tap changing; $K$ is the transformer ratio and $P$ is the power consumption level. 
Combine \eqref{eq:ULTCPd} with \eqref{eq:ULTCVapend}, yields:
\begin{equation} \label{}
\centering
P=gV_1^2=gK^2V^2=g_{eq}\,V^2
\end{equation}
where $g_{eq}$ is the equivalent conductance, $g_{eq}=gK^2$, then:
\begin{equation} \label{eq:UTLCdotgeq}
\centering
\dot{g_{eq}}=2gK\dot{K}
\end{equation}
From \eqref{eq:ULTCKapend} and \eqref{eq:UTLCdotgeq}, we have:

\begin{equation}
\label{eq:ULTCdoteq1}
\centering
\dot{g_{eq}}=-\frac{2gK}{T}(V_1-V^0)
\end{equation}
Since $g_{eq}=gK^2$, \eqref{eq:ULTCdoteq1} can be rewritten to  yields (\ref{eq:abstract}):

\begin{equation}
\label{eq:ULTCdoteq2}
\centering
\dot{g_{eq}}=-\frac{2}{T}\sqrt{g\,g_{eq}}(V_1-\,V^0)
\end{equation}

\subsection{Heating load dynamics}

Consider the heating load model in Figure \ref{fig:heatingmodel} \cite{Hill93}.
\begin{figure}[ht]
    \centering
    \includegraphics[width=0.4 \columnwidth]{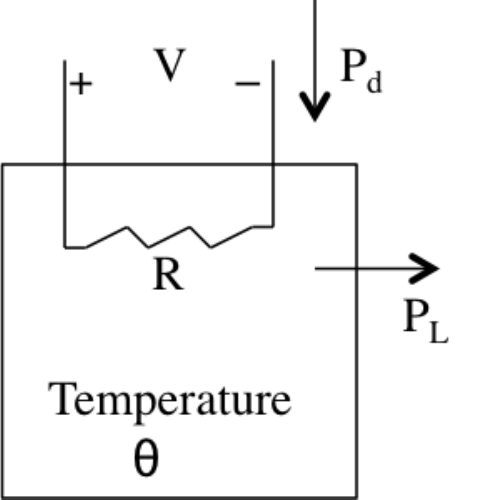}
	\caption{Room model \cite{Hill93}}
    \label{fig:heatingmodel}
\end{figure}
The heating load's characteristics are given as: 

\begin{equation}
\label{eq:heatT}
\centering
T\dot{\theta}=P_d-P_L
\end{equation}
where the power demand $P_d=\frac{V^2}{R(\theta)}$, $P_L$ is the losses. The load conductance can be computed as:
\begin{equation} \label{eq:heatgtheta}
\centering
g=\frac{P_d}{V^2}=\frac{1}{R(\theta)}=f(\theta)
\end{equation}
Differentiating the two sides of \eqref{eq:heatgtheta}, yields:

\begin{equation}
\label{eq:heatdotg}
\centering
\dot{g}=\frac{df}{d\theta}\dot{\theta}
\end{equation}

Substituting \eqref{eq:heatT} into \eqref{eq:heatdotg}, we have:

\begin{equation} \label{eq:heatdotg1}
\centering
\dot{g}=\frac{df}{d\theta}\frac{P_d-P_L}{T}
\end{equation}
If linear resistance characteristic is applied, i.e. $R(\theta)=r\theta$, (\ref{eq:heatdotg1}) becomces:

\begin{equation}\label{eq:heatdotg2}
\dot{g}=-\frac{1}{Tr\theta^2}(g\,V^2-P_L)
\end{equation}
Since $\theta=\frac{1}{rg}$, \eqref{eq:heatdotg2} represents the proposed generic dynamic load model \eqref{eq:abstract}.

\section{The WSCC $3$-machine $9$-bus system data} \label{app:9busdata}

\begin{table}[ht]
    \caption{The WSCC $3$-machine $9$-bus system branch data \cite{PaiDynamics}}
    \centering
    \label{9busbranchdata}
    \begin{tabular}{|c |c |c|c |}
    \hline
    \textbf{From} & \textbf{To} & \textbf{Line impedance}& \textbf{Half shunt capacitance}  \\
    \textbf{bus} & \textbf{bus } & \textbf{$(p.u.)$}& \textbf{$(p.u.)$}\\
               \hline
    $4$ & $5$ & $0.01+j0.085$&$0.088$  \\
    $4$ & $6$ & $0.017+j0.092$ &$0.079$   \\
    $5$ & $7$ & $0.032+j0.161$ &$0.153$ \\
    $6$ & $9$ & $0.039+j0.17$ &$0.179$ \\
    $7$ & $8$ & $0.0085+j0.072$ &$0.0745$ \\
    $8$ & $9$ & $0.0119+j1.008$ &$0.0145$ \\
    
    \hline
    \end{tabular}
\end{table}

\begin{table}[ht]
    \caption{The WSCC $3$-machine $9$-bus system transformer data \cite{PaiDynamics}}
    \centering
    \label{9bustransdata}
    \begin{tabular}{|c |c |c|c |}
    \hline
    \textbf{From bus} & \textbf{To bus} & \textbf{Impedance $(p.u.)$}& \textbf{Tab}  \\
               \hline
    $1$ & $4$ & $j0.0576$&$16.5/230$  \\
    $2$ & $7$ & $j0.0625$ &$18.0/230$   \\
    $3$ & $9$ & $j0.0586$ &$13.8/230$ \\
    \hline
    \end{tabular}
\end{table}

\begin{IEEEbiography}[{\includegraphics[width=1in,height=1.25in,clip,keepaspectratio]{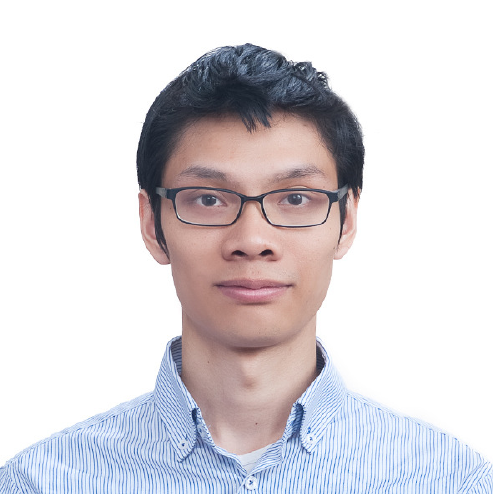}}]{Hung D. Nguyen} (S`12) was born in Vietnam, in 1986. He received the B.E. degree in electrical engineering from Hanoi University of Technology, Vietnam, in 2009, and the M.S. degree in electrical engineering from Seoul National University, Korea, in 2013. He is pursuing a Ph.D. degree in the Department of Mechanical Engineering at Massachusetts Institute of Technology (MIT). His current research interests include power system operation and control; the nonlinearity, dynamics and stability of large scale power systems; DSA/EMS and smart grids.
\end{IEEEbiography}
\vspace{-160 mm}
\begin{IEEEbiography}[{\includegraphics[width=1in,height=1.25in,clip,keepaspectratio]{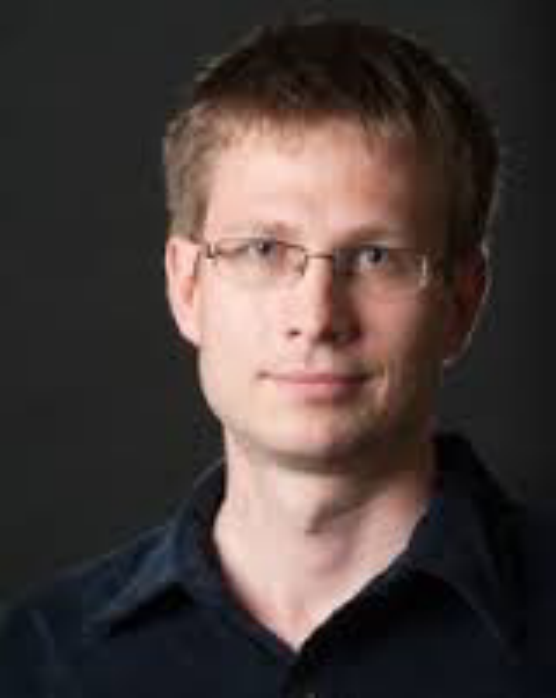}}]{Konstantin Turitsyn} (M`09) received the M.Sc. degree in physics from Moscow Institute of Physics and Technology and the Ph.D. degree in physics from Landau Institute for Theoretical Physics, Moscow, in 2007.  Currently, he is an Assistant Professor at the Mechanical Engineering Department of Massachusetts Institute of Technology (MIT), Cambridge. Before joining MIT, he held the position of Oppenheimer fellow at Los Alamos National Laboratory, and Kadanoff–Rice Postdoctoral Scholar at University of Chicago. His research interests encompass a broad range of problems involving nonlinear and stochastic dynamics of complex systems. Specific interests in energy related fields include stability and security assessment, integration of distributed and renewable generation.
\end{IEEEbiography}

\end{document}